\newtheorem{theorem}{Theorem}
\newtheorem{lemma}{Lemma}
\newtheorem{assumption}{Assumption}
\newtheorem{remark}{Remark}
\DeclareMathOperator{\diag}{diag}
\newcommand{\black}[1]{\color{black}{#1}\color{black}\phantom{}}
\tikzset{diamond state/.style={draw,diamond}}
\title{\LARGE 
Distributed estimation and control of node centrality\\
in undirected asymmetric networks
}
\author{Eduardo~Montijano, Gabriele Oliva and Andrea Gasparri 
\thanks{E.~Montijano is with Instituto de Investigaci\'{o}n en Ingenier\'{\i}a de Arag\'{o}n (I3A), Universidad de Zaragoza, Zaragoza, Spain.
\mbox{E-mail: \textit {emonti@unizar.es}}}%
\thanks{G.~Oliva is with the Unit of Automatic Control, Department of Engineering, Universit\`a Campus Bio-Medico di Roma, Rome, Italy.
\mbox{E-mail: \textit {g.oliva@unicampus.it}}}%
\thanks{A.~Gasparri is with the Department of Computer Science and Automation,
       University of ``Roma Tre'', Rome, Italy
\mbox{E-mail: \textit {gasparri@dia.uniroma3.it}}}%
\thanks{This work has been supported by the Spanish project Ministerio de Econom\'ia y Competitividad DPI2015-69376-R  and partially supported by the European Commission under the Grant Agreement number 774571 (Project PANTHEON -- ``Precision farming of hazelnut orchards'').}
}
\begin{document}

\maketitle
\thispagestyle{empty}
\pagestyle{empty}

\begin{abstract}
Measures of node centrality that describe the importance of a node  within a network are crucial for understanding the behavior of social networks and graphs. In this paper, we address the problems of distributed estimation and control of node centrality in undirected graphs with asymmetric weight values.
In particular, we focus our attention on $\alpha$-centrality, which can be seen as a generalization of eigenvector centrality.
In this setting, we first consider a distributed protocol where agents compute their \mbox{$\alpha$-centrality}, focusing on the convergence properties of the method; 
then, we combine the estimation method with a consensus algorithm to achieve a consensus value weighted by the influence of each node in the network.
Finally, we formulate an \mbox{$\alpha$-centrality} control problem which is naturally decoupled and, thus, suitable for a distributed setting and we apply this formulation to  protect the most valuable nodes in a network against a targeted attack, by making every node in the network equally important in terms of \mbox{$\alpha$-centrality.}
Simulations results are provided to corroborate the theoretical findings. 
\end{abstract}

\section{Introduction}\label{sec:intro}
In graph theory and network analysis, identifying the most {\em central} nodes, i.e., the most important nodes within a graph,  has been a very important research topic for a long time, see~\cite{Freeman:1978,Newman:2010}.
Applications involving centrality concepts include, among others,
identifying the most influential person(s) in a social network, finding key infrastructure nodes in the Internet or urban networks, and pinpointing super-spreaders of disease.  Depending on the specific domain of interest, a variety of metrics have been proposed to measure the centrality of nodes in a network,  ranging from node degree~\cite{wasserman1994social}, eccentricity~\cite{oliva2016distributed}, closeness~\cite{Bavelas:1950} and betweenness~\cite{Freeman:1977} to eigenvector centrality~\cite{Bonacich:1972} and \mbox{$\alpha$-centrality~\cite{Bonacich:2001}.}

Recently, a few works which attempt to compute node centrality in a distributed fashion have been presented to the research community.
In~\cite{Lehmann:2003}, a framework for the calculation of the betweenness-centrality is proposed. 
In~\cite{Wehmuth:2012}, a distributed method is given to assess network closeness-centrality based only on localized information restricted to a given neighborhood around each node.  
In~\cite{Tang:2013,Tang:2014}, different distributed algorithms to compute betweenness and closeness centrality in a tree graph are proposed. In~\cite{Tang:2015}, the authors extend their previous results on closeness centrality to the case of general graphs, by formulating a set of linear inequality and equality constraints, which are distributed in nature. 
The distributed estimation of betweeness centrality is exploited in~\cite{maccari2018distributed} to design an efficient routing protocol.
The authors of~\cite{Rossi:18} present a distributed method for the estimation of the Harmonic influence centrality~\cite{Rossi:16}, defined in the context of opinion dynamics.

One of the most famous measurements of centrality in networks is the PageRank, which is a modified version of eigenvector-centrality, of special interest in web ranking~\cite{Ishii:14}.
In~\cite{Charalambous:2016:CDC}, the authors propose a distributed algorithm for computing the eigenvector centrality, which accounts for both the lack of synchronicity and heterogeneity of agents in terms of clock rates.
In~\cite{Tempo:2017}, the authors propose deterministic finite-time algorithms  for measuring degree, closeness, and betweenness centrality, along with a randomized algorithm for computing the PageRank. 
Recently, distributed Page-Rank estimation was computed by means of the Power method in~\cite{Suzuki:18}.
It is worth mentioning that all these methods focus on the estimation of the centrality, but none of them considers the problem of controlling it, i.e., introducing control mechanisms to drive the centrality value to a specific state.

In this work, we focus our attention on the  \hbox{$\alpha${\em-centrality}}~\cite{Bonacich:2001}, which can be seen as a  generalization of eigenvector centrality that is particularly suitable for networks with asymmetric interactions. Briefly, $\alpha$-centrality measures the total number of paths from a node, exponentially attenuated by their length, where the parameter $\alpha$  sets the length scale of interactions.  
Compared to other centrality metrics, an interesting property of $\alpha$-centrality is that it can be a tool to discriminate between locally and globally connected nodes; by locally connected nodes we mean nodes that are part of a community, in that their neighbors exhibit a large degree of mutual interconnection, while by globally connected nodes we mean nodes that interconnect poorly connected groups of nodes. Notably, studies on human beings~\cite{Granovetter:1973} and animals~\cite{LusseauS477} have provided evidence that these latter nodes, often recognized as ``bridges'' or ``brokers'', play a crucial role in the information flow cohesiveness of the entire group.

Our contribution is then threefold: i) we describe a distributed protocol where agents, by means of local interactions, locally compute their \hbox{$\alpha$-centrality} \black{indices}
and accurately characterize its dynamic behavior; 
ii) we combine the estimation method with a consensus algorithm to achieve a consensus value weighted by the influence of each node in the network.
iii) we formulate an \mbox{$\alpha$-centrality} control problem which is naturally decoupled and, thus, suitable for a distributed setting, exploiting such formulation to protect the most valuable nodes in a network against a targeted attack.
A preliminary version of this paper appeared in~\cite{EM-GO-AG:18}.
Compared to it, we have included additional details on the estimation procedure, as well as lifted an assumption on the weighted consensus application. The control solution has not been published before. 

The rest of the paper is organized as follows. In Section~\ref{sec:preliminaries} some background notions are provided. 
In Section~\ref{sec:estimation}, the proposed distributed algorithm
to compute the $\alpha$-centrality index through local interactions is described. 
In Section~\ref{sec:influence} we describe a novel weighted consensus algorithm based on $\alpha$-centrality.
In Section~\ref{sec:control}
we discuss an optimization problem that can be solved locally to allow the agents reach a desired $\alpha$-centrality.
In Section~\ref{sec:simulations} we present simulation results and in Section~\ref{sec:conclusions} the main conclusions of this work.

\section{Preliminaries}
\label{sec:preliminaries}
Let us consider a network of $N$ nodes labeled by $i \in \mathcal{V}$.   The nodes exchange information with each other following a fixed {\em undirected} communication graph $\mathcal{G}=(\mathcal{V},\mathcal{E})$, where $\mathcal{E} \subset \mathcal{V} \times \mathcal{V}$ represents the edge set.  In this way, nodes $i$ and $j$ can communicate if and only if $(i,j) \in \mathcal{E}$. We assume that the communication graph is connected. The set $\mathcal{N}_i$ of neighbors of node $i\in\mathcal{V}$ is the subset of nodes that can directly communicate with it, i.e., $\mathcal{N}_i = \{j \in \mathcal{V} \;|\; (i,j) \in \mathcal{E}\}$.

Given a graph \mbox{$\mathcal{G}=(\mathcal{V},\mathcal{E})$}, let us define the set of matrices compatible
with $\mathcal{G}$ as
\begin{equation*}
   \mathbb{A}_{\mathcal{G}}=\left \{\mathbf{W}\in \mathbb{R}^{\black{N\times N}}\big|\,  w_{ij}=0, \quad \forall \,  (i,j)\not\in \mathcal{E}\right \}.
\end{equation*}
In particular, we define $\mathbf{W}\in\mathbb{A}_{\mathcal{G}}$ as the influence matrix associated to the network, a weighted adjacency matrix where $w_{ij}\ge 0$ represents the influence that agent $j$'s information has for agent $i$.
Note that the influence matrix can be {\em asymmetric}, to model the fact that two neighboring agents can place different importance on the information provided by each other. 
The influence matrix can also contain values equal to zero between neighbors, meaning that agent~$j$ is completely disregarded by agent~$i$, even when they talk to each other.
Finally, non-neighboring agents have mutual zero influence by construction, since they do not communicate. 

The notion of node centrality in graph theory is built around the idea of measuring how important  a particular vertex is over a certain graph structure. This is typically expressed in terms of a function, $\black{\beta}(\mathbf{W}(\mathcal{G})): \mathbb{R}^{N\times N} \rightarrow \black{\mathbb{R}_{\ge 0}},$ 
that computes a vector where the \mbox{$i$-th} component expresses the importance of node $i$ in the whole network structure, characterized by the matrix $\mathbf{W}$. Note that this function can also be computed for any other weight matrix associated to the graph.

Among the multiple possibilities for computing node centrality, we will focus on the $\alpha$-centrality~\cite{Bonacich:2001}, which is expressed
\begin{equation}
\label{Eq:a-centrality}
    \bm \rho_\alpha = \left(\mathbf{I}_{N}-\alpha\mathbf{W}^T \right)^{-1}\mathbf{z},
\end{equation}
where $\alpha$ is a parameter that measures how the importance fades away with distance in communication hops and $\mathbf{z}$ is an arbitrary vector, with non-negative values and at least one positive value, that can be used to provide the nodes with some initial importance. 

It is noteworthy that setting $\mathbf{z} = \mathbf{1}_{N}$ we can easily extend our algorithms to deal with an important metric in networks, namely, the {\em Katz-centrality}~\cite{Newman:2010},
which is defined as
\begin{equation}
\label{Eq:katz-centrality}
    \bm \rho_K = \left(\left(\mathbf{I}_{N}-\alpha\mathbf{W}^T \right)^{-1}-\mathbf{I}_{N}\right)\mathbf{1}_{N}.
\end{equation}

\begin{assumption}
\label{Ass:alpha}
The parameter $\alpha>0$ is such that \mbox{$\rho(\mathbf{W})\alpha < 1,$} where $\rho(\mathbf{W})$ denotes the spectral radius of~$\mathbf{W},$~\cite{Bonacich:2001}.
\end{assumption}

\black{This assumption is required to ensure that
$\bm\rho_\alpha$ in Eq.~\eqref{Eq:a-centrality} is well defined and belongs to $\mathbb{R}_{\ge 0}^N.$ 
More specifically, 
 \mbox{$\rho(\mathbf{W})\alpha < 1$} guarantees the existence of the inverse in Eq.~\eqref{Eq:a-centrality}, and this 
allows us to rewrite the expression as a sum of positive terms, since $\alpha,\mathbf{W}$ and $\mathbf{z}$ are non-negative, leading to the second claim.}

\section{Distributed Estimation of Node $\alpha$-Centrality}
\label{sec:estimation}

In this section we present a distributed linear iteration protocol that allows each agent to compute its own value of $\alpha$-centrality.
Let $c_i(t)$ be the estimation that agent $i$ has of the $i$-th component of $\bm \rho_\alpha$ at iteration $t$, with arbitrary initial conditions and which is updated by
\begin{equation}
\label{Eq:Dist_acentralityi}
    c_i(t+1) = \alpha \sum_{j\in {\mathcal{N}}_i} w_{ji}c_j(t) + z_i.
\end{equation}
The protocol can also be expressed in vectorial form as
\begin{equation}
\label{Eq:Dist_acentralityVec}
    \mathbf{c}(t+1) = \alpha \mathbf{W}^T \mathbf{c}(t) + \mathbf{z}.
\end{equation}

Note that, a slightly modified version of this algorithm was originally proposed in~\cite{Newman:2010} and in~\cite{Suzuki:18} for a centralized setup and for distributed Page-Rank estimation, respectively. In this regard, we also provide an accurate characterization of the convergence rate of this algorithm, which was not discussed in any of the aforementioned papers.

It is noteworthy that, compared to other typical linear protocols, Eq.~\eqref{Eq:Dist_acentralityi} uses $w_{ji}$ instead of $w_{ij}$.
Although mathematically there are no differences in using $\mathbf{W}$ instead of $\mathbf{W}^T,$ there is an interesting motivation for considering this.
If the algorithm is used with $w_{ij}$ instead of $w_{ji}$, the final outcome would be a measurement of how influenced a node is by the rest of the network. In real applications, e.g., Twitter, there is more interest in knowing how a particular user can affect the network than in knowing the opposite, which is the motivation for this difference with respect to literature.

In order to characterize the convergence properties of the algorithm, let $e_c(t) = \|\mathbf{c}(t) -\bm \rho_\alpha\|$
be the estimation error for the whole centrality vector at iteration $t$.

\begin{theorem}
\label{Theorem1}
For any matrix $\mathbf{W}$  and any value of $\alpha$ that satisfies Assumption~\ref{Ass:alpha}, the execution of~\eqref{Eq:Dist_acentralityVec} leads to the distributed computation of node $\alpha$-centrality, i.e.,
\begin{equation}
\label{Eq:DistCentrality_conv}
    \lim_{t\to\infty} \mathbf{c}(t) = \bm \rho_\alpha.
\end{equation}
Besides, 
the estimation error, $e_c(t),$ is upper-bounded by 
\begin{equation}
\label{Eq:errorBound}
    e_c(t)\le \gamma\dfrac{(2-\kappa)\kappa^{t}}{1-\kappa} 
    \max(\left\| \mathbf{c}(0)\right\|_\mathbf{W},\left\| \mathbf{z}\right\|_\mathbf{W}),
\end{equation}
\black{with $0\le\kappa<1$ and $\gamma>0$ constants and $\| \cdot \|_\mathbf{W}$ a norm, that depends on $\mathbf{W},$ such that $\alpha\|\mathbf{W}\|_\mathbf{W}<1$.}
\end{theorem}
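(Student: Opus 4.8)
The plan is to reduce the iteration~\eqref{Eq:Dist_acentralityVec} to a standard fixed-point/contraction argument. First I would observe that $\bm\rho_\alpha$ is a fixed point of the map $\mathbf{x}\mapsto\alpha\mathbf{W}^T\mathbf{x}+\mathbf{z}$: indeed $\alpha\mathbf{W}^T\bm\rho_\alpha+\mathbf{z}=\alpha\mathbf{W}^T(\mathbf{I}_N-\alpha\mathbf{W}^T)^{-1}\mathbf{z}+\mathbf{z}=(\mathbf{I}_N-\alpha\mathbf{W}^T)^{-1}\mathbf{z}=\bm\rho_\alpha$, where the inverse exists by Assumption~\ref{Ass:alpha}. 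Subtracting this fixed-point relation from~\eqref{Eq:Dist_acentralityVec} gives the error recursion $\mathbf{c}(t+1)-\bm\rho_\alpha=\alpha\mathbf{W}^T(\mathbf{c}(t)-\bm\rho_\alpha)$, hence $\mathbf{c}(t)-\bm\rho_\alpha=(\alpha\mathbf{W}^T)^t(\mathbf{c}(0)-\bm\rho_\alpha)$. Since $\rho(\alpha\mathbf{W}^T)=\alpha\rho(\mathbf{W})<1$, we have $(\alpha\mathbf{W}^T)^t\to\mathbf{0}$, which yields the convergence claim~\eqref{Eq:DistCentrality_conv}.

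For the quantitative bound I would invoke the classical fact that for any matrix $\mathbf{M}$ and any $\varepsilon>0$ there is an induced norm $\|\cdot\|_{\mathbf{M}}$ with $\|\mathbf{M}\|_{\mathbf{M}}\le\rho(\mathbf{M})+\varepsilon$ (a consequence of the Schur/Jordan form). Applying this to $\mathbf{M}=\alpha\mathbf{W}^T$ and choosing $\varepsilon$ small enough that $\rho(\alpha\mathbf{W}^T)+\varepsilon<1$, we obtain a norm with $\alpha\|\mathbf{W}\|_{\mathbf{W}}=:\kappa<1$. The matrix norm $\|\cdot\|_{\mathbf{W}}$ and its associated vector norm are equivalent to the Euclidean norm, say $\frac1\gamma\|\cdot\|\le\|\cdot\|_{\mathbf{W}}\le\gamma\|\cdot\|$ for some $\gamma\ge 1$; we absorb all such equivalence constants into a single $\gamma$. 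Then
\begin{equation*}
e_c(t)=\|\mathbf{c}(t)-\bm\rho_\alpha\|\le\gamma\,\kappa^t\,\|\mathbf{c}(0)-\bm\rho_\alpha\|_{\mathbf{W}}.
\end{equation*}

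It remains to bound $\|\mathbf{c}(0)-\bm\rho_\alpha\|_{\mathbf{W}}$ in terms of the data $\mathbf{c}(0)$ and $\mathbf{z}$. Writing $\bm\rho_\alpha=\sum_{k\ge 0}(\alpha\mathbf{W}^T)^k\mathbf{z}$ (the Neumann series, valid under Assumption~\ref{Ass:alpha}), the triangle inequality gives $\|\bm\rho_\alpha\|_{\mathbf{W}}\le\sum_{k\ge 0}\kappa^k\|\mathbf{z}\|_{\mathbf{W}}=\|\mathbf{z}\|_{\mathbf{W}}/(1-\kappa)$, whence $\|\mathbf{c}(0)-\bm\rho_\alpha\|_{\mathbf{W}}\le\|\mathbf{c}(0)\|_{\mathbf{W}}+\|\mathbf{z}\|_{\mathbf{W}}/(1-\kappa)\le\bigl(1+\tfrac{1}{1-\kappa}\bigr)\max(\|\mathbf{c}(0)\|_{\mathbf{W}},\|\mathbf{z}\|_{\mathbf{W}})=\tfrac{2-\kappa}{1-\kappa}\max(\|\mathbf{c}(0)\|_{\mathbf{W}},\|\mathbf{z}\|_{\mathbf{W}})$. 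Combining with the previous display yields exactly~\eqref{Eq:errorBound}. The only delicate point — the ``hard part'' — is the construction of the norm $\|\cdot\|_{\mathbf{W}}$ adapted to $\mathbf{W}$ with $\alpha\|\mathbf{W}\|_{\mathbf{W}}<1$; everything else is bookkeeping with the Neumann series and norm equivalence. I would state that norm-construction step as a cited lemma rather than reproving it.
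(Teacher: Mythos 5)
Your proof is correct and takes essentially the same route as the paper: the same Neumann-series representation of $\bm\rho_\alpha$, the same adapted norm with $\kappa=\alpha\|\mathbf{W}\|_\mathbf{W}<1$ (the norm-construction lemma you defer to a citation is exactly the result the paper invokes), and the same norm-equivalence constant $\gamma$. The only difference is presentational: you extract the geometric decay from the fixed-point error recursion $\mathbf{c}(t)-\bm\rho_\alpha=(\alpha\mathbf{W}^T)^t(\mathbf{c}(0)-\bm\rho_\alpha)$ and then bound $\|\mathbf{c}(0)-\bm\rho_\alpha\|_\mathbf{W}$ by the triangle inequality, whereas the paper writes the identical vector as $(\alpha\mathbf{W}^T)^t\mathbf{c}(0)-\sum_{k\ge t}(\alpha\mathbf{W}^T)^k\mathbf{z}$ and bounds the two pieces separately; both orderings yield the same constant $(2-\kappa)/(1-\kappa)$.
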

\begin{proof}
First of all, from Assumption~\ref{Ass:alpha}, we can express the centrality vector as a series
\begin{equation}
\label{Eq:rhoSeries}
    \bm\rho_\alpha  =   \sum_{k=0}^{\infty}(\alpha\mathbf{W}^T)^{k}\mathbf{z}.
\end{equation}
Note that~\eqref{Eq:Dist_acentralityVec} corresponds to 
\begin{equation}
\label{Eq:centrality_induction}
    \mathbf{c}(t) = \left(\alpha\mathbf{W}^T\right)^{t}\mathbf{c}(0) + \sum_{k=0}^{t-1} \left(\alpha\mathbf{W}^T\right)^k\mathbf{z}. 
\end{equation}
Therefore, when $t$ goes to infinity, by Assumption~\ref{Ass:alpha} the first term of the equation goes to zero and the second approaches to~\eqref{Eq:rhoSeries}, 
thus proving the convergence of our algorithm.

\black{Regarding the error, by Assumption~\ref{Ass:alpha}, we know that $\alpha\rho(\mathbf{W})<1$. By resorting to  Theorem 6.9.2 of~\cite{Stoer92}, we can assure that there exists a norm, $\| \cdot \|_\mathbf{W}$, dependent on $\mathbf{W}$, such that $\alpha\|\mathbf{W}\|_\mathbf{W}<1$. Besides, for any two norms 
$\|\cdot \|_a$ and
$\|\cdot \|_b$ we can always find a positive constant $\gamma$  such that 
for any vector $\mathbf{v} \in  \mathbb{
R}^N$ it holds $\|\mathbf{v}\|_a < \gamma \|\mathbf{v}\|_b$.
Then, using again Eqs.~\eqref{Eq:rhoSeries} and~\eqref{Eq:centrality_induction}, }
\begin{equation}
    \begin{aligned}
        &e_c(t) = \|\mathbf{c}(t) -\bm \rho_\alpha\| \le \gamma\|\mathbf{c}(t) -\bm \rho_\alpha\|_\mathbf{W} \\
         &= \gamma\left\| \left(\alpha\mathbf{W}^T\right)^{t}\mathbf{c}(0) - \sum_{k=t}^{\infty} \left(\alpha\mathbf{W}^T\right)^k\mathbf{z}
         \right\|_\mathbf{W} \\
         &\le\gamma \black{\left( \left\| \left(\alpha\mathbf{W}^T\right)^{t}\right\|_\mathbf{W} + \sum_{k=t}^{\infty} \left\|\left(\alpha\mathbf{W}^T\right)^k\right\|_\mathbf{W}\right)}
        \max(\mathbf{c},\mathbf{z})\\
        &\le 
        \gamma\left(\kappa^t+\sum_{k=t}^{\infty}\kappa^k\right) \max(\mathbf{c},\mathbf{z}),
     \end{aligned}
\end{equation}
where, for clarity, $\max(\mathbf{c},\mathbf{z})$ is an abbreviation for $\max(\left\| \mathbf{c}(0)\right\|_\mathbf{W},\left\| \mathbf{z}\right\|_\mathbf{W})$ and $\kappa = \alpha\|\mathbf{W}\|_\mathbf{W} < 1.$
Finally, applying the properties of geometric series the bound in Eq.~\eqref{Eq:errorBound} is obtained.
\end{proof}

Let us now discuss the influence of the parameter $\alpha$ on the estimation procedure. 
According to Assumption~\ref{Ass:alpha} it follows that the parameter $\alpha$ needs to be ``small enough'' to ensure the convergence of the estimation algorithm. In order to obtain a distributed procedure for computing such parameter $\alpha$, it should be noticed that it holds
\begin{equation*}
    \|\mathbf{W}\| \le \sqrt{\|\mathbf{W}\|_1 \|\mathbf{W}\|_\infty};
\end{equation*}
therefore, by choosing 
\begin{equation}
    \alpha<1/(\sqrt{\|\mathbf{W}\|_1 \|\mathbf{W}\|_\infty})
\end{equation}
it follows that 
\begin{equation*}
    \alpha\rho(\mathbf{W}) \black{ \le } \alpha \|\mathbf{W} \| \black{ \le }
    \dfrac{\|\mathbf{W} \|}{\sqrt{\|\mathbf{W}\|_1 \|\mathbf{W}\|_\infty}} < 1.
\end{equation*}
thus Assumption~\ref{Ass:alpha} holds by construction.
At this point, we observe that 
 the communication graph is undirected and the agents have knowledge of the entries $w_{ij}$ and $w_{ji}$ corresponding to their neighbors. Therefore, both the one norm and the infinity norm of $\mathbf{W}$ can be easily computed in finite time via a max-consensus (leader election) protocol~\cite{lynch1996distributed}. As a consequence, if needed, the network can agree in a distributed fashion upon a suitable value of the parameter $\alpha$.

\section{Influenced-based weighted consensus}
\label{sec:influence}
In this section we present a consensus algorithm that is able to reach a weighted consensus on some initial conditions accounting for the influence that each node has in the network.
The algorithm can be used in cooperative estimation problems, where the degree of confidence of the different nodes is encoded in the influence matrix, thus weighting more the opinion of important nodes.
A peculiarity of our algorithm is that it does not require to know beforehand the actual value of the network $\alpha$-centrality. Instead, the algorithm adjusts the consensus value according to the current value of the $\alpha$-centrality vector, estimated in parallel in a distributed fashion.

Let $x_i(0)$ be the initial condition of agent $i$ to be incorporated in the consensus iteration and $\mathbf{x}(0)$ the concatenation of the initial conditions of all the agents in vector form.
Compared to the typical consensus problem of reaching the average of the initial conditions, our aim is to compute in a distributed fashion the following quantity,
\begin{equation}
\label{Eq:weightedConsensus}
    x^* = \dfrac{\bm\rho_\alpha^T \mathbf{x}(0)}{\bm\rho_\alpha^T \mathbf{1}},
\end{equation}
which is a weighted average based on the global influence that each agent has in the network, according to the \mbox{$\alpha$-centrality} vector. In this way, more influential agents will have larger weight in the final consensus value than those with less influence power.
While there exist algorithms that, knowing $\bm\rho_\alpha$, are able to compute Eq.~\eqref{Eq:weightedConsensus}, (e.g., see~\cite{Sundaram:2008}), 
our objective is to compute this value without prior knowledge of the centrality vector by the network.

\black{In order to do this, let us start by defining the Perron matrix, $\mathbf{Q}=[q_{ij}],$ associated to the Laplacian matrix $L$ of $\mathcal{G}$, i.e., \mbox{$\mathbf{Q} = \mathbf{I} - \varepsilon L(\mathcal{G}), \varepsilon>0$}.
Assume $\varepsilon$ is small enough to ensure that $\mathbf{Q}$ is symmetric and doubly stochastic matrix\footnote{$\varepsilon<2/N$ ensures this assumption. 
} with largest eigenvalue equal to one and the second largest (in absolute value), denoted by $\lambda_{\mathbf{Q}}<1$.}
For the sake of completeness, recall that, under the above assumptions on $\varepsilon$, the classical linear iteration
\begin{equation}
\label{eq:classicCons}
    \mathbf{x}(t+1) = \mathbf{Q} \mathbf{x}(t),
\end{equation}
asymptotically converges to the average of the initial conditions, $\mathbf{x}(0)$,~\cite{FB-JC-SM:09}.

The high level idea of our algorithm consists of applying an exogenous input, $\gamma_i,$ to the classical consensus algorithm, such that the new final value corresponds to Eq.~\eqref{Eq:weightedConsensus},
\begin{equation}
\label{Eq:ConsensusInput}
    x^*=\dfrac{1}{N}\sum_{i\in\mathcal{V}} \left(x_i(0) + \gamma_i \right). 
\end{equation}
Combining Eq.~\eqref{Eq:weightedConsensus} and Eq.~\eqref{Eq:ConsensusInput}, the value of this input needs to be
\begin{equation}
\label{Eq:correctionTerm}
    \gamma_i = \left( \dfrac{N\rho_i}{\sum_{j=1}^N \rho_j} - 1\right)x_i(0) = \left( \dfrac{\rho_i}{\bar{\rho}_\alpha} - 1\right)x_i(0)
\end{equation}
where $\rho_i$ represents the $i$-th component of $\bm\rho_\alpha$ and $\bar{\rho}_\alpha$ represents the average of the
influence weights $\bm\rho_\alpha$, that is 
\begin{equation*}
\bar{\rho}_\alpha =\dfrac{1}{N} \sum_{j=1}^N \rho_{j}.
\end{equation*}

However, note that, as mentioned before, agents do not have the knowledge of $\bm\rho_\alpha$ nor of its average, $\bar{\rho}_\alpha$.
Thus, our proposed algorithm consists of the following cascading update rules,
\begin{subequations}
\label{Eq:WeightedCons_Ind}
     \begin{align}
        c_i(t+1) &= \alpha \sum_{j\in {\mathcal{N}}_i} w_{ji}c_j(t) + z_i, \label{eq:cascade:a}\\
        \Delta c_i (t+1) &= c_i(t+1) - c_i(t),\\
        \label{Eq:WconsInd3}
        \bar{c}_i(t+1) &= \sum_{j\in \black{\mathcal{N}_i}}q_{ij}\bar{c}_j(t) + \Delta c_i(t+1),\\
        \label{Eq:yi}
        {y}_i(t+1) &= \left(\dfrac{c_i(t+1)}{\bar{c}_i(t+1)}-1\right) {x}_i(0),\\
        \label{Eq:deltayi}
        \Delta y_i (t+1) &= y_i(t+1) - y_i(t),\\
        x_i(t+1) &= \sum_{j\in \black{\mathcal{N}_i}}q_{ij}x_j(t)+\Delta y_i (t+1),\label{eq:cascade:f}
     \end{align}
\end{subequations}
with \black{initial conditions such that $c_i(0)=\bar{c}_i(0)=z_i$}, $y_i(0)=0$ 
and $x_i(0)$ the initial consensus value of agent $i$ as defined at the beginning of the section.
The same algorithm can be expressed in vectorial form as,
\begin{subequations}
\label{Eq:WeightedCons_Vec}
    \begin{align}
    \label{Eq:CentEst_vec1}
    \mathbf{c}(t+1) &= \alpha \mathbf{W}^T \mathbf{c}(t) + \mathbf{z},\\
    \label{Eq:CentEst_vec2}
    \Delta\mathbf{c}(t+1) &= \mathbf{c}(t+1)-\mathbf{c}(t),\\
    \label{Eq:WeightCons_AvgCent}
        \bar{\mathbf{c}}(t+1) &= \mathbf{Q}\bar{\mathbf{c}}(t) + \Delta\mathbf{c}(t+1),\\
        \label{Eq:WeightCons_ytvect}
        \mathbf{y}(t+1) &= \diag \left(\dfrac{c_i(t+1)}{\bar{c}_i(t+1)}-1\right)\mathbf{x}(0),\\
        \label{Eq:WeightCons_DeltaY}
        \Delta \mathbf{y} (t+1) &= \mathbf{y}(t+1) - \mathbf{y}(t),\\
        \label{Eq:WeightCons_Infl}
        \mathbf{x}(t+1) &= \mathbf{Q}\mathbf{x}(t)+\Delta \mathbf{y} (t+1).
    \end{align}
\end{subequations}
The intuition behind each rule is the following:
Eq.~\eqref{Eq:CentEst_vec1}, equivalent to Eq.~\eqref{Eq:Dist_acentralityVec},
is used for the distributed computation of $\bm\rho_\alpha$ and included here for completeness of the rule.
Eq.~\eqref{Eq:CentEst_vec2} tracks the changes in the estimation of $\bm\rho_\alpha$.
Eq.~\eqref{Eq:WeightCons_AvgCent} intends to compute the average value of $\bm\rho_\alpha$, 
estimated in $\mathbf{c}(t)$. 
The addition of $\Delta\mathbf{c}(t+1)$ is necessary to account for the estimation error made in $\mathbf{c}(t+1).$
The vector $\mathbf{y}(t)$ aims at computing Eq.~\eqref{Eq:correctionTerm}.
However, since the convergence to the correct value is asymptotic with $\mathbf{Q}$, instead of applying this input at once, we apply it incrementally at each communication round in Eq.~\eqref{Eq:WeightCons_Infl}, similarly to what was done in~\cite{Priolo:2014} to compute the average in an unbalanced digraph.

Before analyzing the convergence properties of the cascade system, we introduce the following Lemma, to handle the possible case of iterations where, $\bar{c}_i(t) = 0.$
\begin{lemma}
\label{lema3}
Suppose that all the entries of the vector $\mathbf{z}$ are non-negative and at least one is strictly positive. Then there exists some $t^*$ such that for all $t>t^*$ all the components in $\bar{\mathbf{c}}(t)$ are strictly positive.
\end{lemma}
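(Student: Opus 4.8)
The plan is to exploit the fact that, in the cascade \eqref{Eq:WeightedCons_Vec}, both $\mathbf{c}(\cdot)$ and $\bar{\mathbf{c}}(\cdot)$ are initialized at $\mathbf{z}$; this particular choice makes the averaging iterate $\bar{\mathbf{c}}(t)$ dominate, entrywise, a plain consensus trajectory whose limit is easy to compute.

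First I would unroll the centrality recursion \eqref{Eq:CentEst_vec1}. Starting from $\mathbf{c}(0)=\mathbf{z}$ one gets $\mathbf{c}(t)=\sum_{k=0}^{t}(\alpha\mathbf{W}^T)^k\mathbf{z}$, i.e.\ the truncation of the series \eqref{Eq:rhoSeries}, so that the increment $\Delta\mathbf{c}(t+1)=\mathbf{c}(t+1)-\mathbf{c}(t)$ equals $(\alpha\mathbf{W}^T)^{t+1}\mathbf{z}$. Because $\alpha>0$, $\mathbf{W}\ge 0$ and $\mathbf{z}\ge 0$ entrywise, every increment $\Delta\mathbf{c}(k)$ is entrywise nonnegative. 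Next I would unroll the averaging recursion \eqref{Eq:WeightCons_AvgCent} from $\bar{\mathbf{c}}(0)=\mathbf{z}$ to obtain $\bar{\mathbf{c}}(t)=\mathbf{Q}^t\mathbf{z}+\sum_{k=1}^{t}\mathbf{Q}^{t-k}\Delta\mathbf{c}(k)$. Since $\mathbf{Q}$ is doubly stochastic, hence entrywise nonnegative, each term $\mathbf{Q}^{t-k}\Delta\mathbf{c}(k)$ is nonnegative, and therefore $\bar{\mathbf{c}}(t)\ge \mathbf{Q}^t\mathbf{z}$ entrywise for every $t$.

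It then suffices to argue that the lower bound $\mathbf{Q}^t\mathbf{z}$ is eventually strictly positive in every component. As $\mathbf{Q}$ is symmetric, doubly stochastic and has second-largest-in-magnitude eigenvalue $\lambda_{\mathbf{Q}}<1$, the power $\mathbf{Q}^t$ converges to $\tfrac{1}{N}\mathbf{1}\mathbf{1}^T$, so $\mathbf{Q}^t\mathbf{z}\to \bar z\,\mathbf{1}$ with $\bar z=\tfrac{1}{N}\sum_{j}z_j$. By the hypothesis on $\mathbf{z}$ we have $\bar z>0$, so there is a $t^*$ such that every component of $\mathbf{Q}^t\mathbf{z}$ exceeds $\bar z/2>0$ for all $t>t^*$; together with $\bar{\mathbf{c}}(t)\ge \mathbf{Q}^t\mathbf{z}$ this yields $\bar c_i(t)>\bar z/2>0$ for every $i$ and every $t>t^*$, which is exactly the claim.

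I expect the only genuinely delicate point to be the observation that the initialization $\mathbf{c}(0)=\mathbf{z}$ forces $\Delta\mathbf{c}(t)\ge 0$ for all $t$: this is what allows one to discard every increment term and collapse the problem to a standard averaging iteration. Should this monotonicity not be available, one would have to follow a heavier route --- proving directly that $\bar{\mathbf{c}}(t)\to\bar\rho_\alpha\mathbf{1}$ by using the invariant $\mathbf{1}^T\bar{\mathbf{c}}(t)=\mathbf{1}^T\mathbf{c}(t)$ and bounding a convolution of two geometrically decaying sequences through the rate estimate of Theorem~\ref{Theorem1}, and then invoking $\bar\rho_\alpha\ge \tfrac{1}{N}\mathbf{1}^T\mathbf{z}>0$ --- whereas the domination argument above avoids all such rate bookkeeping.
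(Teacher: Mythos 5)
Your proof is correct and follows essentially the same route as the paper's: both hinge on the observation that the initialization $\mathbf{c}(0)=\bar{\mathbf{c}}(0)=\mathbf{z}$ forces $\Delta\mathbf{c}(t)\ge 0$, so that $\bar{\mathbf{c}}(t)$ dominates a plain consensus iterate on $\mathbf{z}$, whose limit $\bar z\,\mathbf{1}$ is strictly positive. Your explicit unrolling $\bar{\mathbf{c}}(t)=\mathbf{Q}^t\mathbf{z}+\sum_{k=1}^{t}\mathbf{Q}^{t-k}\Delta\mathbf{c}(k)\ge \mathbf{Q}^t\mathbf{z}$ is in fact a cleaner, fully rigorous rendering of the paper's more informal ``the increment term is only additive, so it suffices to prove the claim with $\Delta c_i(t)=0$'' step.
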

\begin{proof}
First of all, \black{by imposing initial condition $c_i(0)=z_i$ implies that that $c_i(t)$ is an increasing function with $t$, and consequently $\Delta c_i(t)$ is not negative. This can be demonstrated transforming Eq.~\eqref{Eq:CentEst_vec1} into the equivalent form
\begin{equation*}
\mathbf{c}(t+1) = \mathbf{c}(t) + \alpha \mathbf{W}^T \Delta \mathbf{c}(t),
\end{equation*}
As a consequence, this term in Eq.~\eqref{Eq:WeightCons_AvgCent} is only additive.}
This means that if the claim is true without considering this term, then it will also hold including it. Thus, let us assume that $\Delta c_i(t)=0$ for all $i$ and all $t$.
This implies that Eq.~\eqref{Eq:WeightCons_AvgCent} becomes a classic averaging rule as in Eq.~\eqref{eq:classicCons}.
Denote \mbox{$\bar{z}=\sum_i z_i / N$}.
Since all the elements in $\mathbf{z}$ are non-negative and at least one is positive we can assert that $\bar{z}>0.$
Now, we know that for all $\Delta c_i(t)=0,$ Eq.~\eqref{Eq:WeightCons_AvgCent} will converge to the average of the initial condition $\mathbf{c}(0)$ which in this case is equal to $\mathbf{z}$, so $\bar{\mathbf{c}}(t)$ will converge asymptotically to $\bar{z}\mathbf{1}.$
This implies that for any arbitrarily small $\epsilon>0$ we can find a $t^*$ such that for all $t>t^*,$ for all $i$ it holds that $|\bar{c}_i(t)-\bar{z}|<\epsilon$.

Consequently, there will be a time, $t^*$ such that for all $t>t^*,$ all the components $\bar{\mathbf{c}}(t)$ will be strictly positive, completing the proof.
\end{proof}

It should be noticed that Lemma~\ref{lema3} is necessary to provide an algorithmic implementation of the proposed protocol. 
As a matter of fact, by looking at the cascade of update rules given in Eqs.~\eqref{eq:cascade:a}-\eqref{eq:cascade:f}, it can be noticed that if $\bar{c}_i(t)=0$, then Eq.~\eqref{Eq:yi} is not defined. In order to overcome this issue,  Eq.~\eqref{Eq:yi} can be replaced as
\begin{equation}
\label{Eq:Ywelldef}
    y_i(t+1) = 
    \begin{cases}
    \hspace{2mm} y_i(t) & \hbox{if }  \bar{c}_i(t+1) = 0\\[4pt]
    \left(\dfrac{c_i(t+1)}{\bar{c}_i(t+1)}-1\right) {x}_i(0) & \hbox{otherwise}  
    \end{cases}
\end{equation}
As it will be shown later in Theorem~\ref{Theorem2}, this change does not affect the overall convergence of the algorithm. Intuitively, this can be explained by the fact that our goal is to apply the total input by means of a sequence of increments, where at each iteration we compensate for the error in the estimation of the centrality. Therefore, by adding and subtracting the same quantity we do not modify the total input while avoiding the division by zero in Eq.~\eqref{Eq:WeightCons_ytvect}. For the sake of clarity, the equivalent vectorial version of Eq.~\eqref{Eq:WeightCons_ytvect} based on Eq.~\eqref{Eq:Ywelldef} is here omitted.

Let us now review an auxiliary result used to prove the main Theorem of this section, convergence of our algorithm.
\begin{lemma}[Lemma 3.2 in~\cite{Montijano:2014}]
\label{lema2}
Let $0 \le\lambda <1$ and $\{\beta(t)\}$ a bounded sequence
such that $\lim_{t\to \infty} \beta(t) =0$. Then
\begin{equation*}
\lim_{t\to\infty} \sum_{j=0}^t \lambda^{t-j}\; \beta(j) =0.
\end{equation*}
\end{lemma}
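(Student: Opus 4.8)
The statement is the standard "Cesàro-type" fact that convolving a null sequence with a geometrically decaying kernel yields a null sequence. The plan is an $\varepsilon$-split of the sum at an index $T$ chosen so that the tail of $\beta$ is small, and then letting $t\to\infty$ so that the (bounded) head is killed by the factor $\lambda^{t-T}$. First I would fix $M>0$ with $|\beta(j)|\le M$ for all $j$ (such $M$ exists by boundedness), and let $\varepsilon>0$ be arbitrary. Since $\beta(t)\to 0$, choose $T$ such that $|\beta(j)|<\varepsilon$ for all $j>T$.

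Then split $\sum_{j=0}^{t}\lambda^{t-j}\beta(j) = \sum_{j=0}^{T}\lambda^{t-j}\beta(j) + \sum_{j=T+1}^{t}\lambda^{t-j}\beta(j)$ for $t>T$. The second piece is bounded in absolute value by $\varepsilon\sum_{j=T+1}^{t}\lambda^{t-j}\le \varepsilon\sum_{k=0}^{\infty}\lambda^{k} = \varepsilon/(1-\lambda)$, uniformly in $t$. The first piece is bounded in absolute value by $M\sum_{j=0}^{T}\lambda^{t-j} = M\lambda^{t-T}\sum_{j=0}^{T}\lambda^{T-j}\le M\lambda^{t-T}/(1-\lambda)$, and since $0\le\lambda<1$ this tends to $0$ as $t\to\infty$; hence it is below $\varepsilon$ for all $t$ large enough, say $t>T'$. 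Combining, for $t>\max(T,T')$ we get $\bigl|\sum_{j=0}^{t}\lambda^{t-j}\beta(j)\bigr| < \varepsilon(1 + 1/(1-\lambda))$. Since $\varepsilon$ was arbitrary, the limit is $0$.

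There is no real obstacle here; the only mild subtlety is handling the edge case $\lambda=0$, where $\lambda^{t-j}$ should be read as the indicator $\mathbbm{1}_{\{j=t\}}$, so the sum collapses to $\beta(t)\to 0$ trivially — one can either treat this separately in one line or simply note the argument above goes through with the convention $0^0=1$. Since this is Lemma~3.2 of~\cite{Montijano:2014}, an alternative is to cite it directly and omit the argument; I include the short proof above for self-containedness.
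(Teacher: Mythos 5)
Your proof is correct. Note, however, that the paper does not actually prove this lemma: it is imported verbatim as Lemma~3.2 of~\cite{Montijano:2014} and used as a black box, so there is no in-paper argument to compare against. Your $\varepsilon$-split at a cutoff $T$ --- bounding the tail $\sum_{j=T+1}^{t}\lambda^{t-j}\beta(j)$ by $\varepsilon/(1-\lambda)$ via the smallness of $\beta$, and the head $\sum_{j=0}^{T}\lambda^{t-j}\beta(j)$ by $M\lambda^{t-T}/(1-\lambda)\to 0$ via boundedness --- is the standard and essentially canonical argument for this fact, all the estimates check out, and your remark on the $\lambda=0$ edge case is a sensible (if dispensable) precaution; including it for self-containedness is a reasonable alternative to the paper's bare citation.
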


\begin{theorem}
\label{Theorem2}
Assume the conditions in Theorem~\ref{Theorem1} hold;
then, the dynamical system in Eq.~\eqref{Eq:WeightedCons_Vec} converges to
\begin{subequations}
\label{Eq:ConvergenceCon}
    \begin{align}
    \label{Eq:ConvergenceCon0}
    &\lim_{t\to\infty}\mathbf{c}(t) = \bm\rho_\alpha,\\
    \label{Eq:ConvergenceCon1}
    &\lim_{t\to\infty}\Delta\mathbf{c}(t) = \mathbf{0},\\
    \label{Eq:ConvergenceCona}
    &\lim_{t\to\infty}\bar{\mathbf{c}}(t) = \bar{\rho}_\alpha\mathbf{1},\\
    \label{Eq:ConvergenceConb}
    &\lim_{t\to\infty}\mathbf{y}(t) = 
    \diag \left(\gamma_i\right)\mathbf{1},\\
    \label{Eq:ConvergenceCon2}
    &\lim_{t\to\infty}\Delta\mathbf{y}(t) = \mathbf{0},\\
    \label{Eq:ConvergenceConc}
    &\lim_{t\to\infty}\mathbf{x}(t) = x^*\mathbf{1}.
    \end{align}
\end{subequations}
\end{theorem}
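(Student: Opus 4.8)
The plan is to peel the cascade one equation at a time, at each stage feeding the convergence (and boundedness) obtained for the earlier variables into the next one; the recurring engine is the decomposition $\mathbf{Q}^m = J + (\mathbf{Q}^m-J)$, with $J:=\tfrac1N\mathbf{1}\mathbf{1}^T$, together with Lemma~\ref{lema2}.

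First I would dispose of the top two lines. Equation~\eqref{Eq:CentEst_vec1} is identical to~\eqref{Eq:Dist_acentralityVec}, so Theorem~\ref{Theorem1} gives~\eqref{Eq:ConvergenceCon0} directly, together with a geometric bound $e_c(t)\le C\kappa^t$. Writing $\Delta\mathbf{c}(t+1)=(\mathbf{c}(t+1)-\bm\rho_\alpha)-(\mathbf{c}(t)-\bm\rho_\alpha)$ shows that $\{\Delta\mathbf{c}(t)\}$ is bounded and tends to $\mathbf{0}$, which is~\eqref{Eq:ConvergenceCon1}. For $\bar{\mathbf{c}}(t)$ I would unroll~\eqref{Eq:WeightCons_AvgCent} with $\bar{\mathbf{c}}(0)=\mathbf{z}$ to obtain $\bar{\mathbf{c}}(t)=\mathbf{Q}^t\mathbf{z}+\sum_{k=1}^t\mathbf{Q}^{t-k}\Delta\mathbf{c}(k)$ and split every $\mathbf{Q}^m$ as above. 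The $J$-part telescopes: $\sum_{k=1}^t J\,\Delta\mathbf{c}(k)=J(\mathbf{c}(t)-\mathbf{z})\to J(\bm\rho_\alpha-\mathbf{z})=(\bar\rho_\alpha-\bar z)\mathbf{1}$, while $\mathbf{Q}^t\mathbf{z}\to J\mathbf{z}=\bar z\mathbf{1}$, and the two contributions add to $\bar\rho_\alpha\mathbf{1}$. The remainder is controlled by $\|\mathbf{Q}^m-J\|_2\le\lambda_{\mathbf{Q}}^m$ (as $\mathbf{Q}$ is symmetric with second eigenvalue $\lambda_{\mathbf{Q}}<1$ in modulus), so $\|\sum_{k=1}^t(\mathbf{Q}^{t-k}-J)\Delta\mathbf{c}(k)\|\le\sum_{k=1}^t\lambda_{\mathbf{Q}}^{t-k}\|\Delta\mathbf{c}(k)\|\to 0$ by Lemma~\ref{lema2}, since $\{\|\Delta\mathbf{c}(k)\|\}$ is bounded and $\to 0$. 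This proves~\eqref{Eq:ConvergenceCona}.

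Next comes the $\mathbf{y}$-layer. Note $\bm\rho_\alpha\ge\mathbf{z}\ge\mathbf{0}$ (expand $(\mathbf{I}-\alpha\mathbf{W}^T)^{-1}=\sum_{k\ge0}(\alpha\mathbf{W}^T)^k\ge\mathbf{I}$ entrywise) with at least one strictly positive entry, hence $\bar\rho_\alpha>0$; by Lemma~\ref{lema3} there is $t^*$ beyond which every $\bar c_i(t)>0$, so the well-defined update~\eqref{Eq:Ywelldef} agrees with~\eqref{Eq:WeightCons_ytvect} for $t>t^*$. Continuity of $(c,\bar c)\mapsto(c/\bar c-1)x_i(0)$ at $(\rho_i,\bar\rho_\alpha)$ together with~\eqref{Eq:ConvergenceCon0} and~\eqref{Eq:ConvergenceCona} then yields $y_i(t)\to(\rho_i/\bar\rho_\alpha-1)x_i(0)=\gamma_i$, i.e.~\eqref{Eq:ConvergenceConb}, and therefore $\Delta\mathbf{y}(t)\to\mathbf{0}$ with $\{\Delta\mathbf{y}(t)\}$ bounded, which is~\eqref{Eq:ConvergenceCon2}. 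Finally I would repeat verbatim the unrolling/splitting argument on $\mathbf{x}(t)=\mathbf{Q}^t\mathbf{x}(0)+\sum_{k=1}^t\mathbf{Q}^{t-k}\Delta\mathbf{y}(k)$: the $J$-part gives $\mathbf{Q}^t\mathbf{x}(0)\to\bar x(0)\mathbf{1}$ and $\sum_{k=1}^tJ\,\Delta\mathbf{y}(k)=J(\mathbf{y}(t)-\mathbf{y}(0))\to\big(\tfrac1N\sum_i\gamma_i\big)\mathbf{1}$, where $\bar x(0)=\tfrac1N\sum_i x_i(0)$, while the remainder vanishes by Lemma~\ref{lema2}; substituting $x_i(0)+\gamma_i=N\rho_i x_i(0)/\sum_j\rho_j$ from~\eqref{Eq:correctionTerm} identifies the limit with $x^*\mathbf{1}$ of~\eqref{Eq:weightedConsensus}, establishing~\eqref{Eq:ConvergenceConc}.

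The computations are all routine; the one delicate point is the $\mathbf{y}$-layer, where the division by $\bar c_i$ is only \emph{eventually} legitimate — this is precisely what Lemma~\ref{lema3} and the modification~\eqref{Eq:Ywelldef} are for, and one must check that the finitely many ``frozen'' steps ($t\le t^*$) affect neither $\lim_t\mathbf{y}(t)$ nor the telescoped total input $\sum_t\Delta\mathbf{y}(t)=\lim_t\mathbf{y}(t)$ that drives the last equation. Apart from that, everything reduces to the splitting $\mathbf{Q}^m=J+(\mathbf{Q}^m-J)$ plus Lemma~\ref{lema2}.
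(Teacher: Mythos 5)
Your proposal is correct and follows essentially the same route as the paper: unrolling the $\bar{\mathbf{c}}$ and $\mathbf{x}$ recursions, splitting $\mathbf{Q}^m$ into its averaging projector $\tfrac{1}{N}\mathbf{1}\mathbf{1}^T$ plus the contracting remainder (which is exactly the paper's spectral decomposition), telescoping the projector part, and invoking Lemma~\ref{lema2} to kill the remainder, with Lemma~\ref{lema3} handling the $\mathbf{y}$-layer. Your added remarks on $\bar{\rho}_\alpha>0$ and on the finitely many frozen steps of Eq.~\eqref{Eq:Ywelldef} are welcome clarifications of points the paper leaves implicit, but they do not change the argument.
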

\begin{proof}
The limit in Eq.~\eqref{Eq:ConvergenceCon0} was already demonstrated in Theorem~\ref{Theorem1} and the limit in Eq.~\eqref{Eq:ConvergenceCon1} comes naturally from it.
Let us define now the average of the centrality estimation increments, i.e., \mbox{$\Delta\bar{c}(t) = \sum_{i\in\mathcal{V}} \Delta c_i(t)/N$,} \black{and the auxiliary symbol, $\bar{c}_0=1/N\sum_i c_i(0)$.}
The average of the centrality vector can be expressed as an infinite sum, and, using Eq.~\eqref{Eq:ConvergenceCon0}, it holds
\black{
\begin{equation}
\label{Eq:barRhoalphasum}
\begin{aligned}
\sum_{t=1}^{\infty} \Delta\bar{c}(t)&=\sum_{t=1}^{\infty}\dfrac{1}{N}\sum_{i=1}^{N} \Delta c_i(t)
=\dfrac{1}{N}\sum_{i=1}^{N}\sum_{t=1}^{\infty}\Delta c_i(t)\\
&=\dfrac{1}{N}\sum_{i=1}^{N}\left(\lim_{t\to\infty} c_i(t)-c_i(0)\right)\\
&=\dfrac{1}{N}\sum_{i=1}^{N} \left(\rho_{i}-c_i(0)\right)=\bar{\rho}_\alpha - \bar{c}_0,
\end{aligned}
\end{equation}
}
In addition, 
$\bar{\mathbf{c}}(t)$ can be expressed as a sum as follows
\black{
\begin{equation}
\label{Eq:barCtsum}
    \bar{\mathbf{c}}(t) = \mathbf{Q}^{t}\bar{\mathbf{c}}(0) + \sum_{j=1}^{t} \mathbf{Q}^{t-j}\Delta\mathbf{c}(j),
\end{equation}
}
Let us define now the difference, \mbox{$e_{\bar{c}}(t)=\|\bar{\mathbf{c}}(t)-\bar{\rho}_\alpha\mathbf{1}\|,$} and compute its limit when the time goes to infinity,
\black{
\begin{equation}
\label{Eq:errDevel1}
\kern -10pt\begin{aligned}
    &\lim_{t\to\infty} e_{\bar{c}}(t) = \lim_{t\to\infty}\left\|\bar{\mathbf{c}}(t)-\bar{\rho}_\alpha\mathbf{1}\right\| \\
   & = \lim_{t\to\infty}\left\|\mathbf{Q}^t\bar{\mathbf{c}}(0)-\bar{c}_0\mathbf{1}+\sum_{j=1}^{t} \left(\mathbf{Q}^{t-j}\Delta\mathbf{c}(j)-\Delta\bar{c}(j)\mathbf{1}\right)\right\| \\
    & \le \lim_{t\to\infty}\sum_{j=1}^{t} \left\|\mathbf{Q}^{t-j}\Delta\mathbf{c}(j)-\Delta\bar{c}(j)\mathbf{1}\right\|, \\
\end{aligned}
\end{equation}
where the second line comes from replacing Eq.~\eqref{Eq:barRhoalphasum} and Eq.~\eqref{Eq:barCtsum} and the third one is by direct application of norm inequalities, plus accounting that $\lim_{t\to\infty}\mathbf{Q}^t\bar{\mathbf{c}}(0)=\bar{c}_0\mathbf{1}$.
}

Before proceeding, we recall that, for all $t\ge 0,$ it holds
\begin{equation}
\mathbf{Q}^t=\dfrac{1}{N}{\bm 1}\cdot{\bm 1}^T+\sum_{i=2}^N \lambda_{\mathbf{Q},i}^{t}{\bm v}_{\mathbf{Q},i}{\bm v}_{\mathbf{Q},i}^T
\end{equation}
where $\lambda_{\mathbf{Q},i}$ is the eigenvalue of $\mathbf{Q}$ with $i$-th largest magnitude and $v_{\mathbf{Q},i}$ is the associated eigenvector. 
Since $\mathbf{Q}$ is symmetric, we also know that
${\bm v}_{\mathbf{Q},i}^T{\bm 1}=0$, $i\geq 2$.
Thus, following the development of Eq.~\eqref{Eq:errDevel1},
\begin{equation}
\label{Eq:errDevel2}
\begin{aligned}
    \lim_{t\to\infty} e_{\bar{c}}(t) &\le 
\lim_{t\to\infty}\sum_{j=1}^{t} \left\|\sum_{i=2}^N \lambda_{\mathbf{Q},i}^{t-j}{\bm v}_{\mathbf{Q},i}{\bm v}_{\mathbf{Q},i}^T\Delta\mathbf{c}(j)\right.
\\
&\hspace{20mm}\left.+\dfrac{{\bm 1}\cdot{\bm 1}^T }{N}\left(\Delta\mathbf{c}(j)-\Delta\bar{c}(j)\mathbf{1}\right)\right\| \\
    & \le \lim_{t\to\infty}\gamma_\mathbf{Q}\sum_{j=1}^{t} \lambda_{\mathbf{Q},2}^{t-j}\left\|\Delta\mathbf{c}(j)\right\|,
\end{aligned}
\end{equation}
with $\gamma_\mathbf{Q}=N\max_{i\neq 1}\|{\bm v}_{\mathbf{Q},i}{\bm v}_{\mathbf{Q},i}^T\|$ 
a constant.
Finally, using Theorem~\ref{Theorem1} we know that $\|\Delta\mathbf{c}(j)\|$ is bounded and converges to zero as $j$ goes to infinity. Additionally, we know that $0\le\lambda_{\mathbf{Q},2}<1.$ Thus, using Lemma~\ref{lema2} we can assert that $e_{\bar{c}}$ converges to zero, showing that Eq.~\eqref{Eq:ConvergenceCona} is true.


Once we have established convergence of $\bar{\mathbf{c}}(t)$, combining this limit with Theorem~\ref{Theorem1} together with Eq.~\eqref{Eq:correctionTerm} and Lemma~\ref{lema3}, the limit presented in Eq.~\eqref{Eq:ConvergenceConb} follows up straightforwardly and, consequently, so does the one in Eq.~\eqref{Eq:ConvergenceCon2}.


In order to prove Eq.~\eqref{Eq:ConvergenceConc}, let us notice that by combining Eq.~\eqref{Eq:ConsensusInput} together with Eq.~\eqref{Eq:correctionTerm} and recalling that $y_i(0)=0,$ $\forall \, i \,  \in \, \mathcal{V}$, we obtain 
\begin{equation} \label{eq:derivation:1}
\begin{aligned}
    x^* 
    &=\dfrac{1}{N}\sum_{i\in\mathcal{V}} \left(x_i(0) + \displaystyle\lim_{t\to\infty} y_i(t) \right) \\
    &=\dfrac{1}{N}\sum_{i\in\mathcal{V}} \left(x_i(0) +  \sum_{t=1}^{\infty} (y_i(t)-y_i(t-1)) \right) \\
    &=\dfrac{1}{N}\sum_{i\in\mathcal{V}} x_i(0) +   \sum_{t=1}^{\infty}
    \dfrac{1}{N}\sum_{i\in\mathcal{V}} \Delta y_i(t))  \\
     &=\dfrac{1}{N}\sum_{i\in\mathcal{V}} x_i(0) + \sum_{t=1}^{\infty} \Delta \bar{y}(t) \\
\end{aligned},
\end{equation}
where, similarly to the case in Eq.~\eqref{Eq:ConvergenceCona}, the following definition has been used
\begin{equation} \label{eq:bbb}
\Delta\bar{y}(t) = \dfrac{1}{N}\sum_{i\in\mathcal{V}}\Delta y_i(t).
\end{equation}
%

At this point, we observe that $\mathbf{x}(t)$ can be written as
\begin{equation}
    \mathbf{x}(t) = \mathbf{Q}^{t}\mathbf{x}(0) + \sum_{j=1}^{t} \mathbf{Q}^{t-j}\Delta\mathbf{y}(j);
\end{equation}
hence, it follows that
\begin{equation} \label{eq:last_derivation_convergence}
\begin{aligned}
\lim_{t\rightarrow \infty }\mathbf{x}(t) &= \lim_{t\rightarrow \infty }\left(\mathbf{Q}^{t}\mathbf{x}(0) + \sum_{j=1}^{t} \mathbf{Q}^{t-j}\Delta\mathbf{y}(j)\right)\\
&=\dfrac{1}{N}{\bm 1}\cdot{\bm 1}^T {\bm x}(0)+
\lim_{t\rightarrow \infty } \sum_{j=1}^{t} 
\dfrac{1}{N}{\bm 1}\cdot{\bm 1}^T\Delta\mathbf{y}(j)\\
&+\lim_{t\rightarrow \infty } \sum_{i=2}^N \lambda_{\mathbf{Q},i}^{t-j}{\bm v}_{\mathbf{Q},i}{\bm v}_{\mathbf{Q},i}^T
\Delta\mathbf{y}(j).
\end{aligned}
\end{equation}
where it should be noticed that according to Lemma~\ref{lema2} the last term vanishes to zero at $t$ approaches infinity.
Therefore, Eq.~\eqref{eq:last_derivation_convergence} becomes
\begin{equation}
\begin{aligned}
\lim_{t\rightarrow \infty }\mathbf{x}(t) &=\dfrac{1}{N}{\bm 1}\cdot{\bm 1}^T {\bm x}(0)+
\lim_{t\rightarrow \infty } \sum_{j=1}^{t} 
\Delta\overline{{y}}(j){\bm 1}\\
&=\overline{\mathbf{x}}(0){\bm 1}+
\sum_{j=1}^{\infty} 
\Delta\overline{{y}}(j){\bm 1}_n =x^*{\bm 1},
\end{aligned}
\end{equation}
thus concluding the proof.
\end{proof}

To conclude the section, we briefly analyze some of the properties and requirements of the proposed algorithm.
\begin{remark}[Communication demands]
Our algorithm requires the exchange of three values per agent and communication round. On one hand, each agent sends each time the value of \black{$c_i(t).$} This value is the same one used in the previous section to estimate $\bm\rho_\alpha$.
On the other hand, there are two sums in Eq.~\eqref{Eq:WeightedCons_Ind} over the set of neighbors of each agent, one that requires $\bar{c}_j(t)$ to compute the average of $\bm\rho_\alpha$ and the other one that requires $x_j(t)$ to compute the weighted consensus.
\end{remark}

\black{
\begin{remark} [Asymptotic 
 Stability] The average consensus estimator in Eq.~\eqref{Eq:WeightCons_Infl} can be seen as a discrete-time dynamic consensus~\cite{MZ-SM:10} with exogenous input, $\Delta\mathbf{y}(t),$   vanishing as proven in Theorem~\ref{Theorem2}.
 Using Corollary 3.1 in~\cite{MZ-SM:10} we can claim that our algorithm is asymptotically stable.
\end{remark}
}
\vspace{-8mm}

\section{Distributed Control of Node $\alpha$-Centrality}
\label{sec:control}

There are other applications involving networks, where rather than estimating centrality, we are interested in \emph{controlling} its value.  In order to do so, in this section we present an optimization problem that aims at performing minimum variations on the influence matrix to achieve this objective.
Interestingly, the proposed problem can be decomposed into local sub-problems that can be solved at each node via standard methods. Thus, it turns out to be very suitable for a distributed application context.

Let $\bm\rho_\alpha^*$ be the desired $\alpha$-centrality vector for a given graph $\mathcal{G}$ with initial influence matrix $\mathbf{W}$. 
We denote by $x_{ij}$ the amount of variation applied on the influence of agent $j$'s information for agent $i$, and by $\mathbf{X}=[x_{ij}]$ the matrix containing all the changes in the original influence matrix.
Our goal in this section is then to
find the matrix $\mathbf{X}^*\in \mathbb{A}_{\mathcal{G}}$ that solves
\begin{equation}
\label{prob:convexproblem0}
\begin{aligned}
& \underset{\mathbf{X}\in \mathbb{A}_{\mathcal{G}}}{\min}
& &\dfrac{1}{2}\|\mathbf{X}\|_F^2 \\
& \text{subject to}
& & \mathbf{W}+\mathbf{X} \leq \overline{\mathbf{W}} \\
& & &\mathbf{W}+\mathbf{X} \geq \underline{\mathbf{W}}\\
& & & (\mathbf{I}_N-\alpha(\mathbf{W} + \mathbf{X})^T)\bm\rho_\alpha^* = \mathbf{z}.
\end{aligned}\end{equation}
The optimization problem represents that we want minimum effort variation on the matrix $\mathbf{W}$; in this view, we are interested in scenarios where the nodes aim at  slightly modifying how their neighbors perceive their importance in the network while changing the $\alpha$-centrality to the desired value.
The first two constraints are included to model the fact that every influence value cannot change more than an arbitrary amount, encoded by the matrices $\overline{\mathbf{W}}$ and $\underline{\mathbf{W}}$.
The last constraint in Eq.~\eqref{prob:convexproblem0} imposes that the new influence matrix, $\mathbf{W}+\mathbf{X}$, has to yield the desired centrality value.
Notice that, by minimizing the Frobenius norm in the objective function, we explicitly consider a scenario where the variation at each link directly influences the objective function. 

Let ${\bf X}\in \mathbb{A}_G$ be partitioned as ${\bf X} =[{\bf X}_1 \ldots {\bf X}_N]$, where each ${\bf X}_i\in \mathbb{R}^N$. Similarly, let ${\bf W}$, $\overline {\bf W}$, and $\underline {\bf W}$ be partitioned as ${\bf W}=[{\bf W}_1,\ldots {\bf W}_N]$, $\overline {\bf W}=[\overline {\bf W}_1,\ldots \overline {\bf W}_N]$, $\underline {\bf W}=[\underline {\bf W}_1,\ldots \underline {\bf W}_N]$, where each ${\bf W}_i, \overline {\bf W}_i, \underline {\bf W}_i \in \mathbb{R}^N$. With such partitioning, the problem in Eq.~\eqref{prob:convexproblem0} can be equivalently expressed as a collection of $N$ local sub-problems in the form
 \begin{equation}
\label{prob:convexproblem00}
\begin{aligned}
& \underset{{\bf X}_i\in\mathbb{R}^N}{\min}
& &\dfrac{1}{2}{\bf X}_i^T {\bf X}_i\\
& \text{subject to}
& & \mathbf{W}_i+\mathbf{X}_i \leq \overline{\mathbf{W}}_i \\
& & &\mathbf{W}_i+\mathbf{X}_i \geq \underline{\mathbf{W}}_i\\
& & & \rho_{\alpha i}-\alpha(\mathbf{W}_i + \mathbf{X}_i)^T)\bm\rho_\alpha^* = {z}_i,
\end{aligned}
\end{equation} 
where $\underline{w}_{ij}=\overline{w}_{ij} =0$ whenever 
\black{$(i,j)\not \in \mathcal{E}$}, hence $x_{ij}$ is zero for \black{$(i,j)\not \in \mathcal{E}$}. 

\begin{remark}
The problem in Eq.~\eqref{prob:convexproblem00} is a quadratic programming problem with linear inequality constraints; hence, it can be solved using standard techniques/solvers. 
However, in order to solve the sub-problems locally, each node $i$ must know the coefficients $w_{ji},\overline w_{ji},\underline w_{ji},\rho^*_{\alpha j}$ associated to its neighbors; such an information can be obtained via a single communication round.
\end{remark}
\vspace{-5mm}

\subsection{Attack protection mechanisms}
As noted in early \cite{holme2002} and  more recent \cite{berezin2015localized} studies in complex network theory, attacks dealt to the nodes of a network (e.g., disrupting them) may have severe effects in terms of residual connectivity, especially when the attacker selects the targets based on topological features (e.g., degree, centrality, etc.).
Typical protection approaches (see, \cite{faramondi2018network} and references therein) are centralized and aim at prioritizing the protection of the most important nodes, with the aim to make all nodes equally valuable for the attacker.
In order to achieve this task in a distributed way, assuming that  the attacker's choices are driven by the $\alpha$-centrality vector, the control approach outlined in this section appears a valuable tool.
Specifically, in order to be protected against an attacker, the nodes may aim at hiding their true $\alpha$-centrality by forcing all values to be identical. To this end, it is reasonable to assume that the nodes want to modify the weights of the least possible amount (e.g., in oder to minimize the effort or to avoid that an attacker detects large changes).

\section{Simulations}
\label{sec:simulations}
\subsection{Centrality estimation}
In the first simulation we are going to show an example of the distributed estimation of the $\alpha$-centrality of a network and its application to weighted consensus with 15 nodes and topology shown in Fig.~\ref{Fig:Topology}.
We have numbered and assigned colors to each node to better highlight the centrality properties in the simulations.
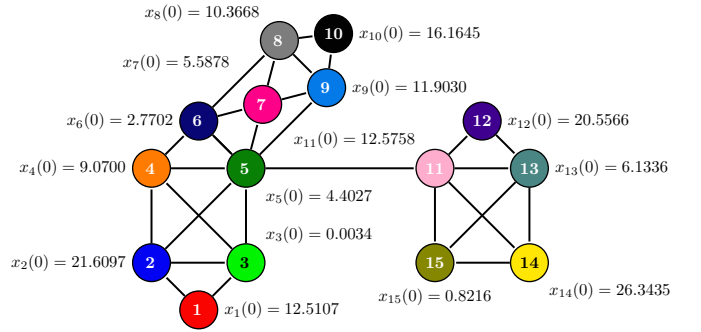
\begin{figure}[!ht]
    \centering
    
           \resizebox {.5\textwidth} {!}{
 \begin{tikzpicture}[
             > = stealth, 
             shorten > = 1pt, 
             auto,
             node distance = 16cm, 
             semithick 
         ]

         \tikzstyle{every state}=[
             draw = black,
             thick,
             fill = white,
             minimum size = 8mm,
             minimum width=2mm
         ]

  	\node[label={right:{$x_1(0)=12.5107$}},state,fill={rgb,255:red,249; green,4; blue,0}] (v1) at (1, -1) {\textcolor{white}{${\bm {1\,}}$}};
    \node[label={left:{$x_2(0)=21.6097$}},state,fill={rgb,255:red,2; green,3; blue,242}] (v2) at (0, 0) {\textcolor{white}{${\bm {2\,}}$}};
    \node[label={above right:{$x_3(0)=0.0034$}},state,fill={rgb,255:red,1; green,243; blue,1}] (v3) at (2,0) {\textcolor{black}{${\bm {3\,}}$}};
    \node[label={ left:{$x_4(0)=9.0700$}},state,fill={rgb,255:red,255; green,123; blue,2}] (v4) at (0, 2) {\textcolor{white}{${\bm {4\,}}$}};
    \node[label={below right:{$x_5(0)= 4.4027$}},state,fill={rgb,255:red,7; green,128; blue,5}] (v5) at (2, 2) {\textcolor{white}{${\bm {5\,}}$}};
    \node[label={ left:{$x_6(0)=2.7702$}},state,fill={rgb,255:red,8; green,6; blue,117}] (v6) at (1, 3) {\textcolor{white}{${\bm {6\,}}$}};
    \node[label={[label distance=0.4cm]above left:{$x_7(0)=5.5878$}},state,fill={rgb,255:red,254; green,1; blue,137}] (v7) at (2+0.7071/2, 3+0.7071/2) {\textcolor{white}{${\bm {7\,}}$}};
    \node[label={above left :{$x_8(0)=10.3668$}},state,fill={rgb,255:red,125; green,125; blue,125}] (v8) at (2+0.7071, 4+0.7071)  {\textcolor{white}{${\bm {8\,}}$}};
    \node[label={right:{$x_9(0)=11.9030$}},state,fill={rgb,255:red,4; green,122; blue,233}] (v9) at (3+0.7071, 3+0.7071)  {\textcolor{white}{${\bm {9\,}}$}};
    \node[label={right:{$x_{10}(0)=16.1645$}},state,fill={rgb,255:red,2; green,2; blue,2}] (v10) at (1.5+2+0.7071/2, 1.5+3+0.7071/2)  {\textcolor{white}{${\bm {10}}$}};
    \node[label={above left:{$x_{11}(0)=12.5758$}},state,fill={rgb,255:red,255; green,173; blue,205}] (v11) at (4+2, 2)  {\textcolor{white}{${\bm {11}}$}};
    \node[label={right:{$x_{12}(0)=20.5566$}},state,fill={rgb,255:red,64; green,0; blue,143}] (v12) at (5+2, 3)  {\textcolor{white}{${\bm {12}}$}};
    \node[label={right:{$x_{13}(0)=6.1336$}},state,fill={rgb,255:red,73; green,134; blue,132}] (v13) at (6+2, 2)  {\textcolor{white}{${\bm {13}}$}};
    \node[label={below right:{$x_{14}(0)=26.3435$}},state,fill={rgb,255:red,255; green,230; blue,5}] (v14) at (6+2, 0) {\textcolor{black}{${\bm {14}}$}};
    \node[label={below:{$x_{15}(0)=0.8216$}},state,fill={rgb,255:red,134; green,135; blue,0}] (v15) at (4+2, 0) {\textcolor{white}{${\bm {15}}$}};

\path[sloped,-, above,very thick] (v1) edge node {} (v2);
\path[sloped,-, above,very thick] (v1) edge node {} (v3);
\path[sloped,-, above,very thick] (v2) edge node {} (v3);
\path[sloped,-, above,very thick] (v2) edge node {} (v4);
\path[sloped,-, above,very thick] (v2) edge node {} (v5);
\path[sloped,-, above,very thick] (v3) edge node {} (v4);
\path[sloped,-, above,very thick] (v3) edge node {} (v5);
\path[sloped,-, above,very thick] (v4) edge node {} (v5);
\path[sloped,-, above,very thick] (v4) edge node {} (v6);
\path[sloped,-, above,very thick] (v5) edge node {} (v6);
\path[sloped,-, above,very thick] (v5) edge node {} (v7);
\path[sloped,-, above,very thick] (v5) edge node {} (v9);
\path[sloped,-, above,very thick] (v5) edge node {} (v11);
\path[sloped,-, above,very thick] (v6) edge node {} (v7);
\path[sloped,-, above,very thick] (v6) edge node {} (v8);
\path[sloped,-, above,very thick] (v6) edge node {} (v5);
\path[sloped,-, above,very thick] (v7) edge node {} (v8);
\path[sloped,-, above,very thick] (v7) edge node {} (v9);
\path[sloped,-, above,very thick] (v8) edge node {} (v9);
\path[sloped,-, above,very thick] (v8) edge node {} (v10);
\path[sloped,-, above,very thick] (v9) edge node {} (v10);
\path[sloped,-, above,very thick] (v11) edge node {} (v12);
\path[sloped,-, above,very thick] (v11) edge node {} (v13);
\path[sloped,-, above,very thick] (v11) edge node {} (v14);
\path[sloped,-, above,very thick] (v11) edge node {} (v15);
\path[sloped,-, above,very thick] (v12) edge node {} (v13);
\path[sloped,-, above,very thick] (v13) edge node {} (v14);
\path[sloped,-, above,very thick] (v13) edge node {} (v15);
\path[sloped,-, above,very thick] (v14) edge node {} (v15);
     \end{tikzpicture}
 }  

    \caption{Network topology. The initial condition $x_i(0)$ for the average consensus is reported next to each node.}
    \label{Fig:Topology}
\end{figure}

First, we consider the distributed computation of $\bm\rho_\alpha$ associated to the adjacency matrix, i.e., $\mathbf{W}=\mathbf{A}$ and symmetric, with a uniform initial importance vector, $\mathbf{z}=\mathbf{1}.$
This way, there is a direct relationship between the centrality value and the connectivity of each node, resulting in node $5$, in green, having the highest centrality value and nodes $1$, ${10}$ and ${12}$ having the lowest values.
The evolution of $\mathbf{c}(t)$ is depicted in Fig.~\ref{Fig:Centrality} (a).
The parameter $\alpha$ has been chosen in the simulation equal to $0.8/\|\mathbf{W}\|$.
Since the matrix is symmetric, the bound in Theorem~\ref{Theorem1} reduces to $\gamma=1$ and $\|\cdot\|_{\mathbf{W}}$ 
the spectral norm, reducing the error by a factor of $0.8$ at each communication round. Considering that in this particular case $\|\mathbf{z}\|/(1-\kappa)
\simeq 19,$ our analytic bound states that the algorithm should reach an accuracy below $0.1$ in approximately 24 communication rounds which is consistent with the plot. 
The difference between the actual estimation error, $e_c(t)$ and the theoretical bound in Eq.~\eqref{Eq:errorBound} is shown in Fig.~\ref{Fig:Centrality} (b), where we can see that this difference is not only positive for all $t$, but also close to zero. 

\begin{figure}[!ht]
\begin{center}
    \subfloat[]{\label{fig1:convrate} \includegraphics[height=0.44\columnwidth]{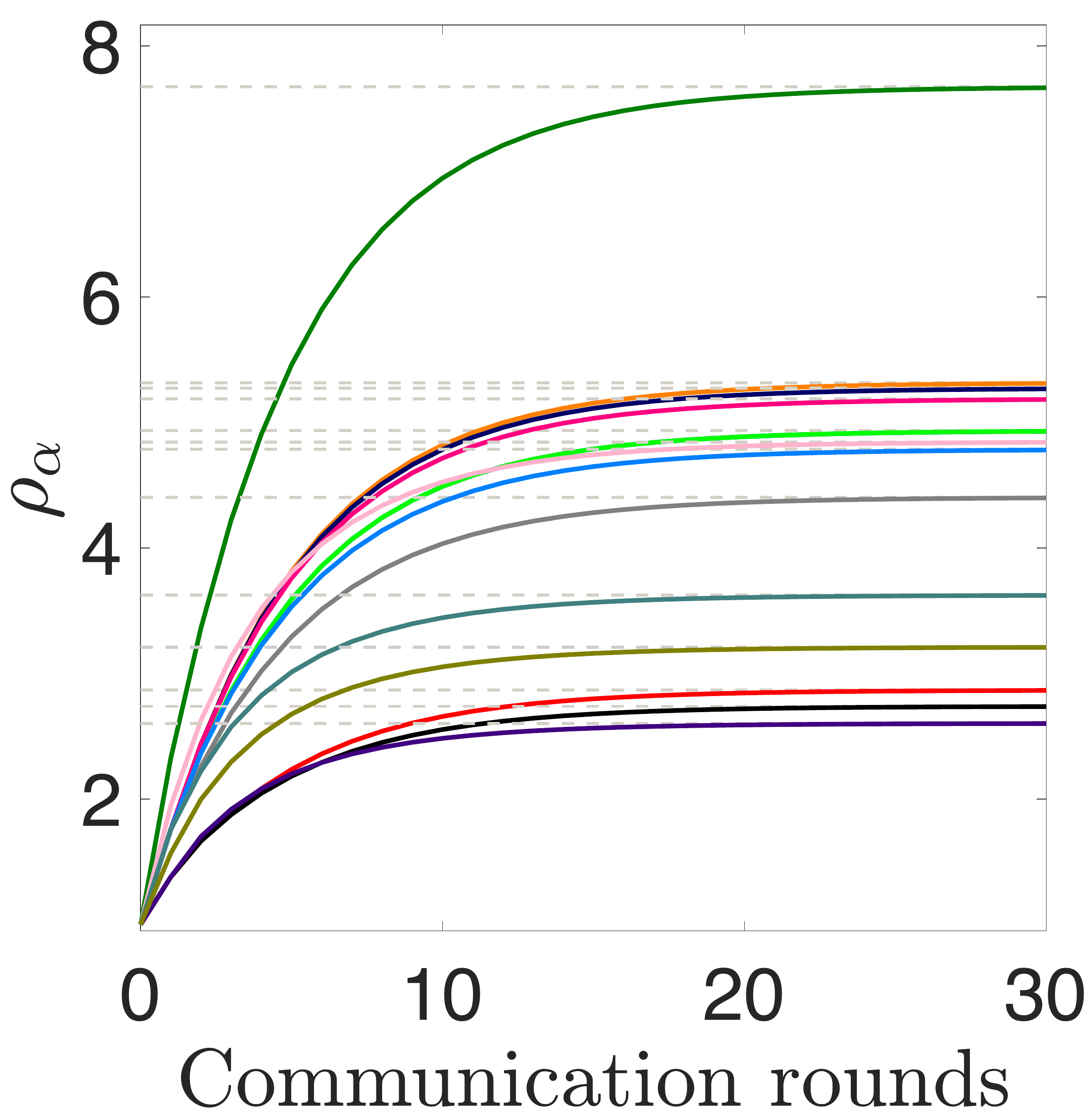}}
    \hspace{1mm}
    \subfloat[]{\label{fig1:convrate} 
    \includegraphics[height=0.449\columnwidth]{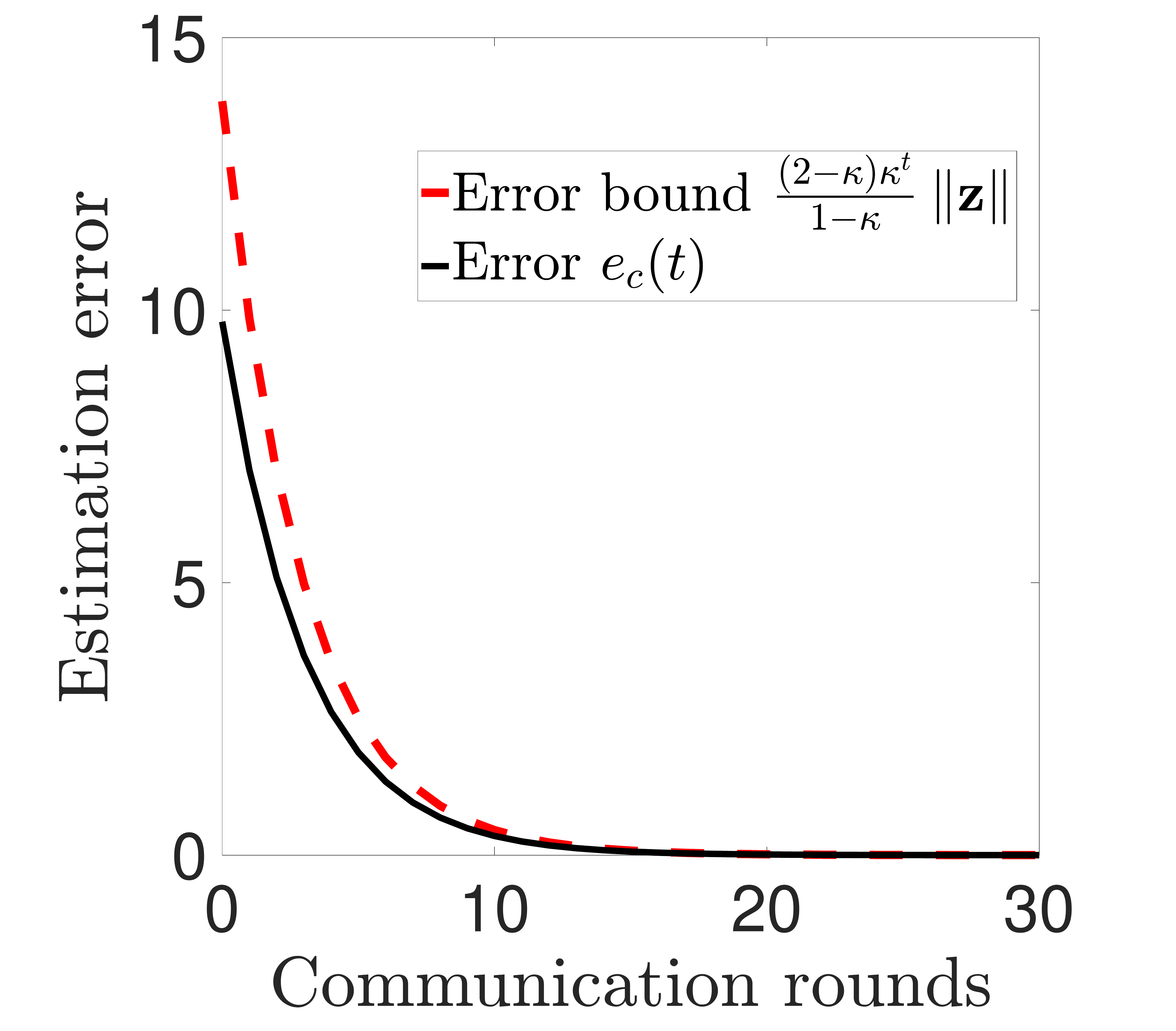}}
\end{center}
    \caption{Distributed estimation of $\bm\rho_\alpha$ for the adjacency matrix and $\mathbf{z}=\mathbf{1}$. Panel~(a): Estimation of $\bm\rho_\alpha$. Panel~(b): Estimation error.
    }
    \label{Fig:Centrality}
\end{figure}

\begin{figure}[!ht]
\begin{center}
    \subfloat[]{\label{fig1:convrate} \includegraphics[height=0.45\columnwidth]{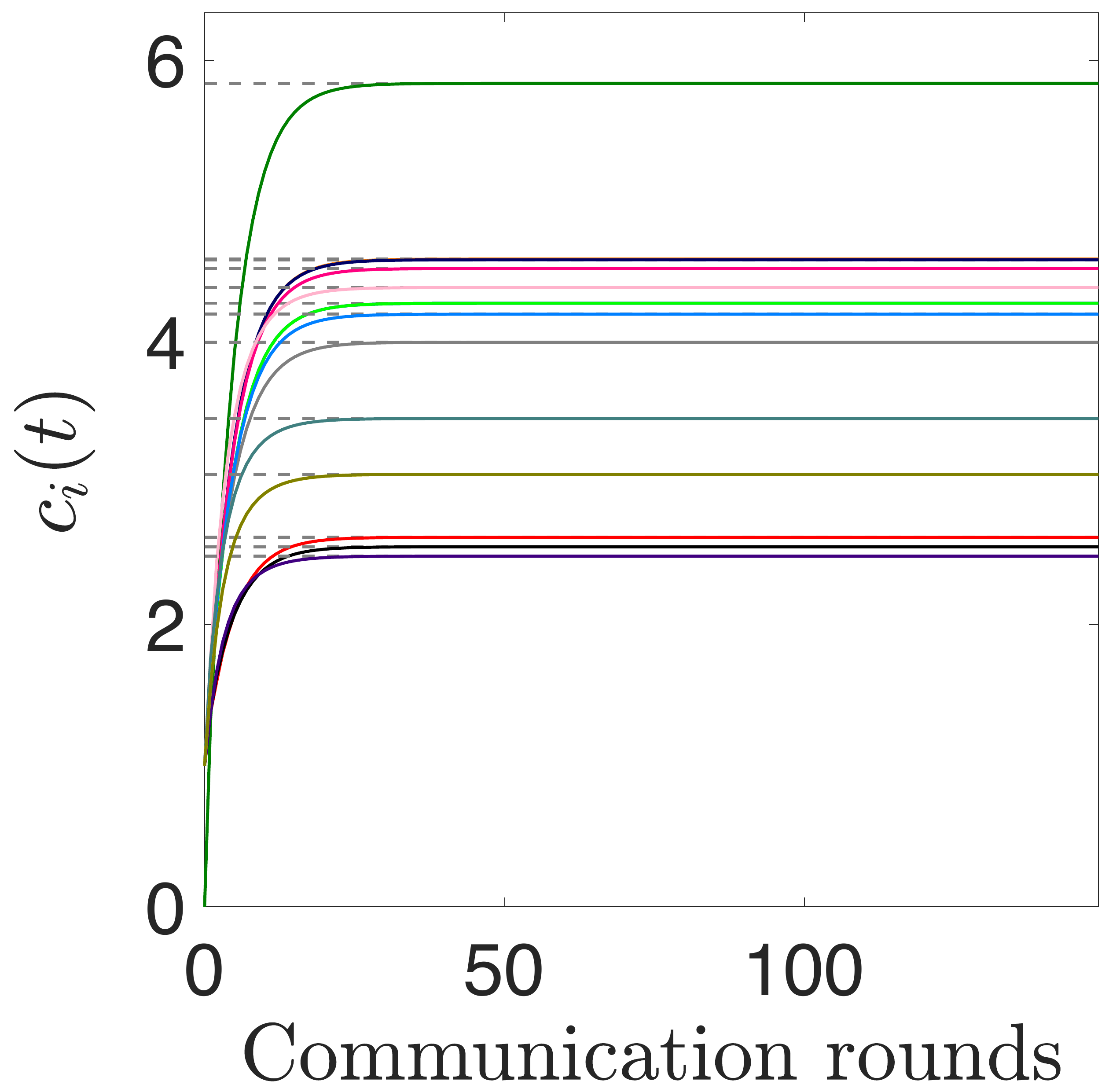}}
    \subfloat[]{\label{fig1:convrate} \includegraphics[height=0.45\columnwidth]{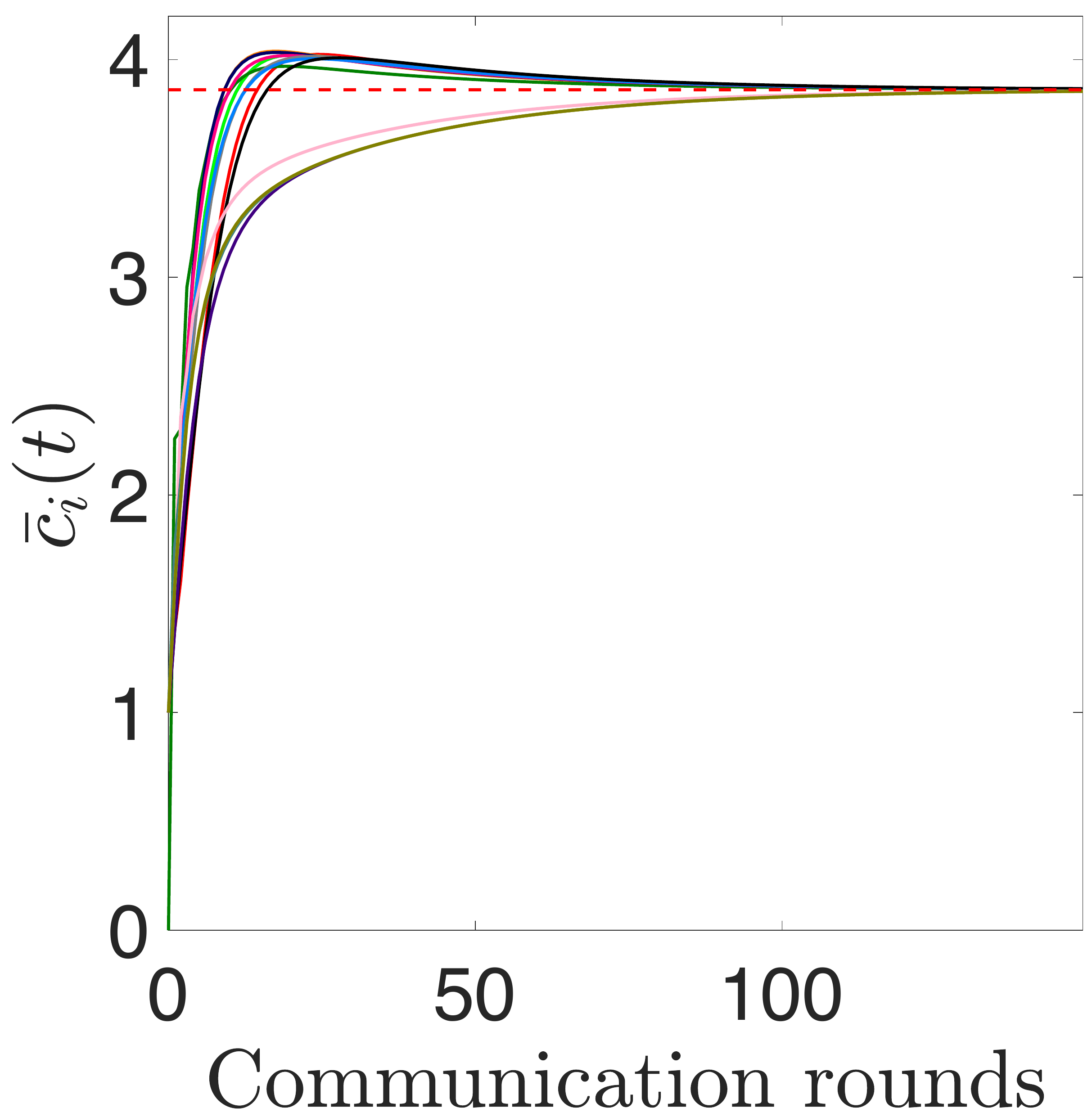}}
    \vspace{2mm}
        \subfloat[]{\label{fig1:convrate} \includegraphics[height=0.45\columnwidth]{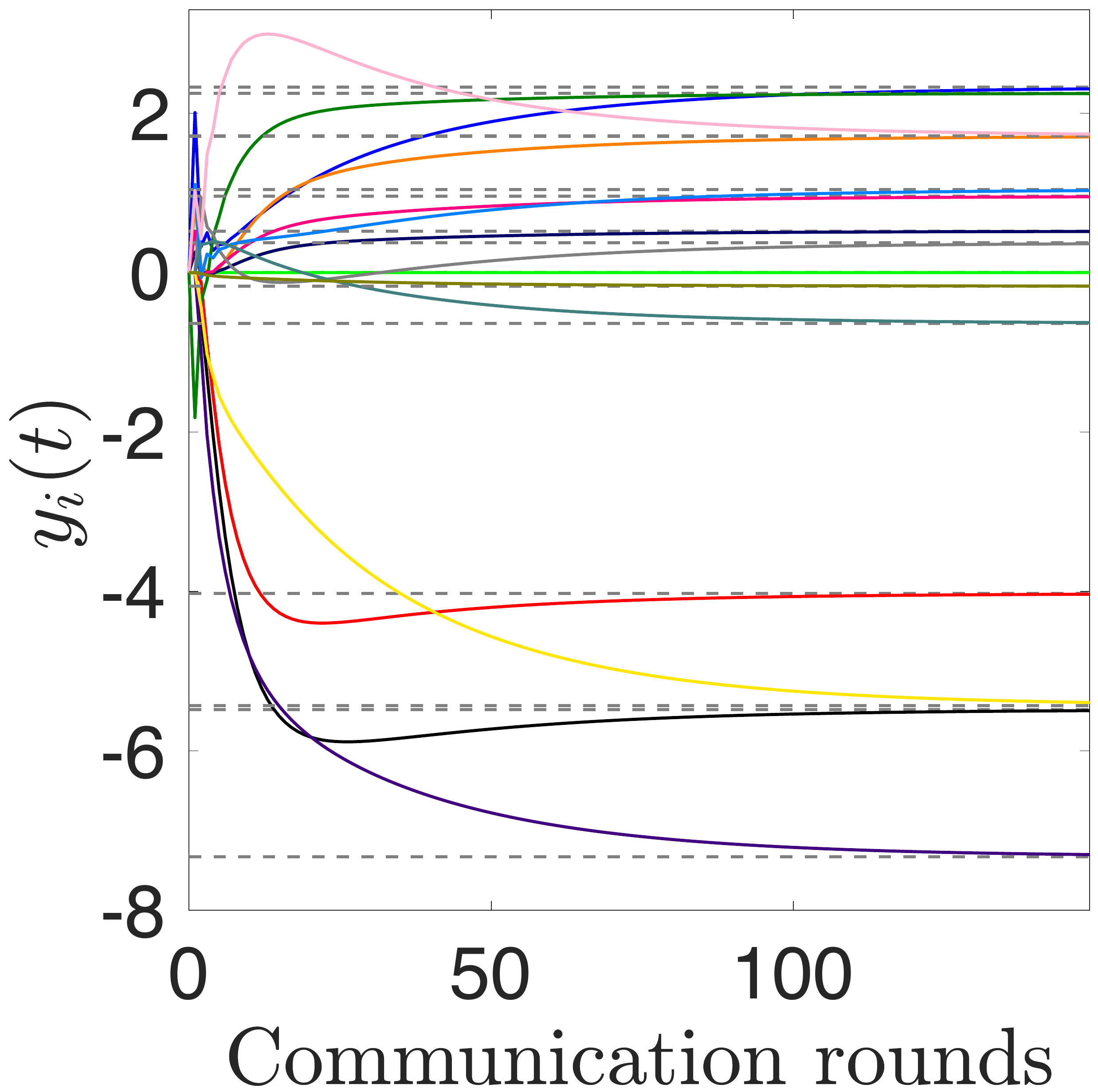}}
    \subfloat[]{\label{fig1:convrate} \includegraphics[height=0.459\columnwidth]{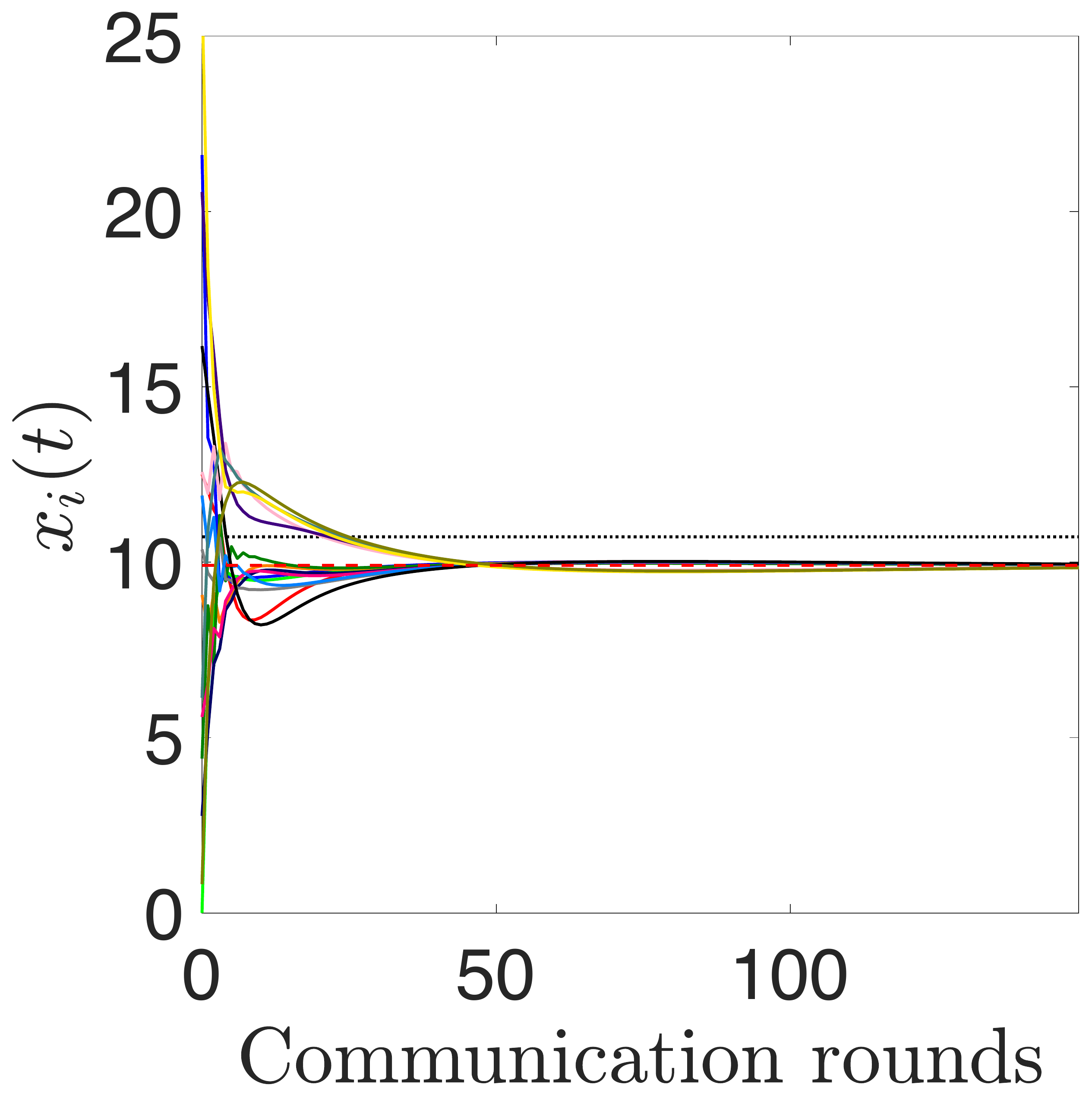}}
\end{center}
    \caption{Influence-based consensus with influence matrix equal to the adjacency matrix and $z_5=0$.
    Panels~(a)--(d) show the evolution of $\mathbf{c}(t)$, $\bar{\mathbf{c}}(t)$, $\mathbf{y}(t)$, and $\mathbf{x}(t)$, respectively.
    }
    \label{Fig:InfConsensus}
\end{figure}

Successively,
we combine the centrality estimation method together with the influence-based consensus proposed in Section~\ref{sec:influence}, considering the initial conditions ${\bm x}(0)$ shown in Figure~\ref{Fig:Topology}.
To highlight the practical implications of Lemma~\ref{lema3}
we consider now an initial importance vector such that $z_5=0$ and $z_i=1$ for every other node $i\neq 5$. 
In Figure~\ref{Fig:InfConsensus} we show the evolution of the four variables analyzed in Theorem~\ref{Theorem2} that do not represent increments, i.e., $\mathbf{c}(t), \bar{\mathbf{c}}(t), \mathbf{y}(t)$ and $\mathbf{x}(t)$.

The top left plot in Figure~\ref{Fig:InfConsensus} shows the new estimation of the centrality vector.
The difference in the initial importance vector leads to a different final centralities. Setting $z_5=0$ we observe a decrease in the final centrality value of node 5, from slightly less than $8$ in Fig~\ref{Fig:Centrality} (a) to slightly less than $6$ in the new simulation.
In the top right plot in Figure~\ref{Fig:InfConsensus} we can observe how all the nodes in the network reach asymptotically the average of $\bm\rho_\alpha$, shown as grey dashed line.
The convergence speed is slower than for the computation of $\mathbf{c}(t)$ due to the slower convergence rate of the powers of $\mathbf{Q}$.

The bottom left plot in Figure~\ref{Fig:InfConsensus} shows the convergence of $\mathbf{y}(t)$ to the desired exogenous input in Eq.~\eqref{Eq:correctionTerm}. 
Note how this input is positive for the most influential nodes, like node $5$ (green) in Fig.~\ref{Fig:Topology} (a), whereas is negative for the least influential nodes, like node ${12}$ (purple).
This is consistent with the idea of giving more weight to the values of the most influential nodes, which in our setup is transformed into increasing their initial condition for the consensus algorithm.
%
Finally, the bottom right plot in Figure~\ref{Fig:InfConsensus} shows the consensus evolution of the initial conditions. For the sake of visualization, we have also included in the plot the value of the average (black dotted line), to better visualize that our algorithm does not converge to this value but to the weighted average (red dashed line) in Eq.~\eqref{Eq:weightedConsensus}.

\subsection{Centrality control}
We provide an example of application of the proposed centrality control scheme. Specifically, we consider the network reported in Figure~\ref{fig:example:1}, for which we have $\alpha$-centrality
${\bm \rho}_\alpha=[8,7,6,5,4,3]^T$.
For security reasons, the network in Figure~\ref{fig:example:1} needs to reach a configuration where all nodes have equal $\alpha$-centrality, and specifically $\alpha$-centrality equal to ${\bm 1}_n$, by modifying the original weights of the least possible amount (in a least square sense).
Let us consider an initial value $\mathbf{z}$ that is proportional to the $\alpha$-centrality given above and, specifically, 
$
\mathbf{z}={\bm \rho}_\alpha/({\bm \rho}_\alpha^T\bm {1}_N).
$
Moreover, let us choose $\underline{w}_{ij}=\underline w=1.5$ and $\overline{w}_{ij}=\overline w=5$, for all edges. 
In Figure~\ref{fig:example:2} we report along the edges the values $x^*_{ij}+w_{ij}$ obtained via a standard quadratic programming solver\footnote{For simplicity we use the quadprog solver in Matlab\texttrademark.}.
Notice that we obtain \mbox{$\|{\bm X}^*\|^2_F/2=4.6742$}, i.e., we are able to make all nodes equal in terms of $\alpha$-centrality with a small variation of the weights.
\begin{remark}
In this example we choose ${\bm \rho}^*_{\alpha}={\bm 1}_N$.
An interesting extension for future work would be to set \mbox{${\bm \rho}^*_{\alpha}=\rho{\bm 1}_N$} and let the agents collectively chose the value $\rho$ that corresponds to the minimum variation in the influence matrix.
\end{remark}
 \color{black}
 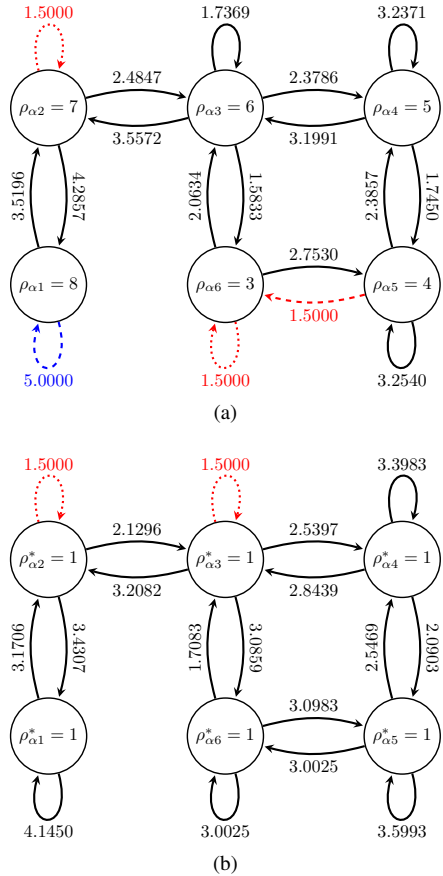
\begin{figure}
 \begin{center} 
 \begin{tabular}{c}
 \subfloat[]{
 \label{fig:example:1}
 \resizebox {.33\textwidth} {!}{
 \begin{tikzpicture}[
             > = stealth, 
             shorten > = 1pt, 
             auto,
             node distance = 3.5cm, 
             semithick 
         ]

         \tikzstyle{every state}=[
             draw = black,
             thick,
             fill = white,
             minimum size = 4mm,
             minimum width=15mm
         ]

  	\node[state] (v1) {$\rho_{\alpha 1}=8$};
         \node[state] (v2) [above of=v1] {$\rho_{\alpha 2}=7$};
         \node[state] (v3) [ right of=v2] {$\rho_{\alpha 3}=6$};
         \node[state] (v4) [right of=v3] {$\rho_{\alpha 4}=5$};
         \node[state] (v5) [below of=v4] {$\rho_{\alpha 5}=4$};
         \node[state] (v6) [left of=v5] {$\rho_{\alpha 6}=3$};

 \path[->,loop below,very thick,dashed,blue] (v1) edge node {$5.0000$} (v1);
 \path[->,loop above,very thick,dotted, red] (v2) edge node {$1.5000$} (v2);
 \path[->,loop above,very thick] (v3) edge node {$1.7369$} (v3);
 \path[->,loop above,very thick] (v4) edge node {$3.2371$} (v4);
 \path[->,loop below,very thick] (v5) edge node {$3.2540$} (v5);
 \path[->,loop below,very thick,dotted, red] (v6) edge node {$1.5000$} (v6);
             
 \path[sloped,->, above, bend left=15,very thick] (v1) edge node {$3.5196$} (v2);
 \path[sloped,->, above, bend left=15,very thick] (v2) edge node {$4.2857$} (v1);

 \path[sloped,->, above, bend left=15,very thick] (v2) edge node {$2.4847$} (v3);
 \path[sloped,->, below, bend left=15,very thick] (v3) edge node {$3.5572$} (v2);
 \path[sloped,->, above, bend left=15,very thick] (v3) edge node {$2.3786$} (v4);
 \path[sloped,->, below, bend left=15,very thick] (v4) edge node {$3.1991$} (v3);

 \path[sloped,->, above, bend left=15,very thick] (v4) edge node {$1.7450$} (v5);
 \path[sloped,->, above, bend left=15,very thick] (v5) edge node {$2.3857 $} (v4);

 \path[sloped,->, below, bend left=15,very thick,dashed, red] (v5) edge node {$1.5000$} (v6);
 \path[sloped,->, above, bend left=15,very thick] (v6) edge node {$2.7530 $} (v5);

 \path[sloped,->, above, bend left=15,very thick] (v3) edge node {$1.5833$} (v6);
 \path[sloped,->, above, bend left=15,very thick] (v6) edge node {$2.0634$} (v3);
     \end{tikzpicture}
 }
 }
\\
 \subfloat[]{
 \label{fig:example:2}
 \resizebox {.33\textwidth} {!}{
 \begin{tikzpicture}[
             > = stealth, 
             shorten > = 1pt, 
             auto,
             node distance = 3.5cm, 
             semithick 
         ]

         \tikzstyle{every state}=[
             draw = black,
             thick,
             fill = white,
             minimum size = 4mm,
             minimum width=15mm
         ]

  	\node[state] (v1) {$\rho^*_{\alpha 1}=1$};
         \node[state] (v2) [above of=v1] {$\rho^*_{\alpha 2}=1$};
         \node[state] (v3) [ right of=v2] {$\rho^*_{\alpha 3}=1$};
         \node[state] (v4) [right of=v3] {$\rho^*_{\alpha 4}=1$};
         \node[state] (v5) [below of=v4] {$\rho^*_{\alpha 5}=1$};
         \node[state] (v6) [left of=v5] {$\rho^*_{\alpha 6}=1$};

 \path[->,loop below,very thick] (v1) edge node {$4.1450 $} (v1);
 \path[->,loop above,very thick,dotted, red] (v2) edge node {$1.5000$} (v2);
 \path[->,loop above,very thick,dotted, red] (v3) edge node {$1.5000$} (v3);
 \path[->,loop above,very thick] (v4) edge node {$3.3983$} (v4);
 \path[->,loop below,very thick] (v5) edge node {$3.5993$} (v5);
 \path[->,loop below,very thick] (v6) edge node {$3.0025$} (v6);
             
 \path[sloped,->, above, bend left=15,very thick] (v1) edge node {$3.1706$} (v2);
 \path[sloped,->, above, bend left=15,very thick] (v2) edge node {$3.4307$} (v1);

 \path[sloped,->, above, bend left=15,very thick] (v2) edge node {$2.1296$} (v3);
 \path[sloped,->, below, bend left=15,very thick] (v3) edge node {$3.2082$} (v2);
 \path[sloped,->, above, bend left=15,very thick] (v3) edge node {$2.5397$} (v4);
 \path[sloped,->, below, bend left=15,very thick] (v4) edge node {$2.8439$} (v3);

 \path[sloped,->, above, bend left=15,very thick] (v4) edge node {$2.0903$} (v5);
 \path[sloped,->, above, bend left=15,very thick] (v5) edge node {$2.5469$} (v4);

 \path[sloped,->, below, bend left=15,very thick] (v5) edge node {$3.0025$} (v6);
 \path[sloped,->, above, bend left=15,very thick] (v6) edge node {$3.0983$} (v5);

 \path[sloped,->, above, bend left=15,very thick] (v3) edge node {$3.0859$} (v6);
 \path[sloped,->, above, bend left=15,very thick] (v6) edge node {$1.7083$} (v3);
     \end{tikzpicture}
 }
 }
  \end{tabular}
 \caption{Example of $\alpha$-centrality control for a network with $n=6$ nodes. Panel (a) reports the initial coefficients $w_{ij}$ along each edge and the $\alpha$-centrality value at each node $v_i$. Panel (b) shows the resulting weights $w_{ij}+x^*_{ij}$ and  $\alpha$-centrality values as a result of the local solution of the sub-problem in Eq.~\eqref{prob:convexproblem00} by each node.
 We show in red dotted lines and blue dashed lines the links where the weights reach the lower and upper bound, respectively.
 }
  \label{fig:example}
 \end{center}
 \end{figure}

\section{Conclusions}
\label{sec:conclusions}
In this work, the problems  of distributed node centrality identification and control have been addressed. We have developed a protocol for the distributed computation of $\alpha$-centrality, 
which is particularly suitable for networks with asymmetric interactions. 
We have also discussed a local solution for the computation of minimum variation of weights such that the network yields a desired centrality value.
In addition, motivated by studies on social networks, we have proposed a novel consensus-based
algorithm which runs in parallel to the $\alpha$-centrality estimation and achieves a weighted consensus, where
the weights are given precisely by the values of the $\alpha$-centrality. 
The control algorithm has also been applied to the problem of minimizing agents' vulnerability to external influences.

\bibliographystyle{IEEEtran}
\bibliography{references}%

\end{document}